\newtheorem{Theorem}{Theorem}[section]
\newtheorem{Lemma}[Theorem]{Lemma}
\newtheorem{Remark}[Theorem]{Remark}
\newtheorem{Definition}[Theorem]{Definition}
\newtheorem{Corollary}[Theorem]{Corollary}
\newtheorem{Example}[Theorem]{Example}
\numberwithin{equation}{section}
\numberwithin{table}{section}
\begin{document}

\title{Galois Self-Dual  Constacyclic Codes}

\insert\footins{\footnotesize
{\it Email addresses}:  yfan@mail.ccnu.edu.cn (Y. Fan)}

\author{Yun Fan\quad and\quad Liang Zhang\\
\small Dept of Mathematics,
\small  Central China Normal University, Wuhan 430079, China}

\date{}

\maketitle

\begin{abstract}
Generalizing Euclidean inner product and Hermitian inner product,
we introduce Galois inner products, and study the Galois self-dual
constacyclic codes in a very general setting by a uniform method.
The conditions for existence of Galois self-dual and
isometrically Galois self-dual constacyclic codes are obtained.
As consequences, the results on self-dual, iso-dual and Hermitian self-dual
constacyclic codes are derived.

\medskip{\it Keywords:}
Constacyclic code, Galois inner product, 
$q$-coset function, isometry, Galois self-dual code.

\medskip{\it MSC2010:} 12E20, 94B60.
\end{abstract}

\section{Introduction}
Constacyclic codes over finite fields are a generalization
of cyclic codes over finite fields, and inherit most of the
advantages of cyclic codes. They can be theoretically studied with polynomials,
and can be performed by feed-back shift registers in practice.
There have been many references about the constacyclic codes.
We are concerned with the research related to the duality and self-duality
of constacyclic codes.

Let ${\Bbb F}_q$ be a finite field with $q=p^e$ elements, where $p$ is a prime.
Let $\lambda$ be a non-zero element of ${\Bbb F}$, and $n$ be a positive integer.
As usual, ${\Bbb F}_q[X]$ denotes the polynomial ring.
Each element $\sum_{i=0}^{n-1}a_iX^i$ of
the quotient ring ${\Bbb F}_q[X]/\langle X^n-\lambda\rangle$
is identified with a word $(a_0,a_1,\cdots,a_{n-1})\in{\Bbb F}_q^n$.
Any ideal $C$ of ${\Bbb F}_q[X]/\langle X^n-\lambda\rangle$
is called a {\em $\lambda$-constacyclic code} of length~$n$ over ${\Bbb F}_q$.
The $1$-constacyclic codes are just the cyclic codes.
The $(-1)$-constacyclic codes are also called {\em negacyclic codes}.
If the greatest common divisor $\gcd(n,p)\!=\!1$, then
$X^n-\lambda$ has no repeated (multiple) roots and
${\Bbb F}_q[X]/\langle X^n-\lambda\rangle$ is a semisimple ring.

At the semisimple case, there are no self-dual cyclic codes.
Leon, Masley and Pless \cite{LMP} started the research on {\em duadic}
and {extended} self-dual cyclic codes. Since then,
duadic cyclic codes and various generalizations were investigated extensively, e.g.,
Pless \cite{P87}, Smid \cite{S}, Rushanan \cite{R}, Ding and Pless \cite{DP}
studied the duadic and extended self-dual cyclic codes.
Brualdi and Pless \cite{BP},
Ward and Zhu \cite{WZ}, Ling and Xing \cite{LX},
Sharma, Bakshi, Dumir and Raka \cite{SBDR}
studied the polyadic cyclic or abelian codes. 
Williams \cite{W}, Matinnes-P\'erez and Williams \cite{MW},
Fan and Zhang \cite{FZg}, Jitman, Ling and Sol\'e \cite{JLS}
studied self-dual or Hermitian self-dual group codes.

Dinh and Lopez-Permouth \cite{DL}, Dinh \cite{D}
studied constacyclic codes;
in particular, they showed that in the semisimple case
self-duality happens for and only for negacyclic codes.
Lim \cite{L} studied polyadic consta-abelian codes.
Blackford \cite{Bl08} gave conditions for the existence of
the so-called {\em Type-I duadic} negacyclic codes.
Chen, Fan, Lin and Liu \cite{CFLL} introduced a class of isometries
to classify constacyclic codes.
Blackford \cite{Bl13} introduced {\em isometrically self-dual}
(``{\em iso-dual}'' briefly) constacyclic codes, which
are proved to be just the Type-I duadic constacyclic codes.
Chen, Dinh, Fan and Ling \cite{CDFL} exhibited necessary and sufficient
conditions for the existences of polyadic constacyclic codes.
Fan and Zhang~\cite{FZl} classified the so-called {\em Type-II duadic}
constacyclic codes which are in fact isometrically
maximal self-orthogonal constacyclic codes.
Note that most of the studies mentioned above considered the semisimple case;
and, even in this case, there are less results on Hermitian self-dual constacyclic codes.

In this paper we study the duality and self-duality of constacyclic codes
in a more general setting and by a uniform method.
First, we consider any constacyclic codes, without the assumption ``$\gcd(n,p)=1$''.
Second, we define more general {\em isometries} between
constacyclic codes which may be not semisimple.
Third, we introduce a kind of inner products, called {\em Galois inner products},
as follows: for each integer $h$ with $0\le h<e$ (recall that $q=p^e$), define:
\begin{equation}\label{galois inner}
 \langle {\bf a},{\bf b}\rangle_h\!=\!\sum_{i=0}^{n-1}a_ib_i^{p^h},~~
 \forall~{\bf a}\!=\!(a_0,a_1,\cdots,a_{n-1}),
  {\bf b}\!=\!(b_0,b_1,\cdots,b_{n-1}) \in{\Bbb F}_q^n.
\end{equation}
It is just the usual Euclidean inner product if $h=0$.
And, it is the Hermitian inner product if $e$ is even and $h=\frac{e}{2}$.
Then the {\em Galois dual codes} of constacyclic codes,
the {\em Galois self-dual} (and {\em isometrically Galois self-dual})
constacyclic codes are naturally defined,
which are investigated in this paper.

Since ``$p{\,|\,}n$'' is allowed, constacyclic codes are no longer characterized
by sets of zeros. In Section 2, we introduce so-called {\em $q$-coset functions}
to characterize constacyclic codes.

We'll study the {\em isometrically Galois self-dual}
constacyclic codes in our general setting. So, in Section 3,
we define the isometries between constacyclic codes in the general setting
and explore their properties. The main result is Theorem \ref{M_s phi} below.

In Section 4, with the isometries introduced in Section 3
we characterize the Galois dual codes of constacyclic codes by
$q$-coset functions. The main result is Theorem \ref{C_phi bot} below.
The results on dual and Hermitian dual codes are listed in
Corollary \ref{Euc-Herm} below.

In Section 5, a necessary and sufficient condition for the
existence of isometrically Galois self-dual constacyclic codes
is obtained, see Theorem \ref{iso h-self dual} below,
which covers of course the isometrically self-dual case 
and the isometrically Hermitian self-dual case.

In Section 6, we study Galois self-dual constacyclic codes, and show 
a necessary and sufficient condition for their existence,
see Theorem \ref{h-self dual} below.
The existence results on self-dual constacyclic codes and
Hermitian self-dual constacyclic codes are drawn as consequences,
see Corollary \ref{self-dual} and Corollary \ref{Hermitian self-dual} below.

Finally, some examples are illustrated in Section 7.

\section{Constacyclic codes and $q$-coset functions}

In this paper we always take the following notations:
\begin{itemize}
\item
${\Bbb F}_q$ denotes the finite field
with cardinality $|{\Bbb F}_q|=q=p^e$,
where $p$ is a prime and $e$ is a positive integer,
and ${\Bbb F}_q^*$ denotes the multiplicative group
consisting of units of ${\Bbb F}_q$.
So ${\Bbb F}_q^*$ is a cyclic group of order $q-1$.
\item
$n$ is any positive integer, $\nu_p(n)$ denotes the $p$-adic valuation of $n$;
hence $n=p^{\nu_p(n)} n'$ with $n'$ being coprime to $p$.
\item
$h\in[0,e]$, where $[0,e]=\{0,1,\cdots,e\}$ is an integer interval,
and $\langle{\bf a},{\bf b}\rangle_h=\sum_{i=0}^{n-1}a_ib_i^{p^h}$
as in Eqn \eqref{galois inner}.
\item
$\lambda\in {\Bbb F}_q^*$ with
${\rm ord}_{{\Bbb F}_q^*}(\lambda)=r$,
where ${\rm ord}_{{\Bbb F}_q^*}(\lambda)$ denotes the order of 
$\lambda$ in the group ${\Bbb F}_q^*$, hence $r\,\big|\,(q-1)$.
\item
$R_{n,\lambda}={\Bbb F}_q[X]/\langle X^n-\lambda\rangle$
is the quotient ring of the polynomial ring
${\Bbb F}_q[X]$ over ${\Bbb F}_q$ with respect to the ideal
$\langle X^n-\lambda\rangle$ generated by $ X^n-\lambda$.
By $C\le R_{n,\lambda}$ we mean that $C$ is an ideal of $R_{n,\lambda}$, i.e.,
$C$ is a $\lambda$-constacyclic code of length $n$ over ${\Bbb F}_q$.
\end{itemize}

\begin{Remark}\label{galois}\rm
By ${\Bbb Z}_e$ we denote the residue ring of
the integer ring ${\Bbb Z}$ modulo $e$.
Then the additive group of ${\Bbb Z}_e$ is isomorphic to
the Galois group of ${\Bbb F}_q$ over ${\Bbb F}_p$ by
mapping $h\in {\Bbb Z}_e$ to the Galois automorphism
$\gamma_{p^h}$ of ${\Bbb F}_q$, where
$\gamma_{p^h}(a)=a^{p^h}$ for all $a\in{\Bbb F}_q$.
So, we call $\langle{\bf a},{\bf b}\rangle_h$ a {\em Galois inner product}
on ${\Bbb F}_q^n$.
\end{Remark}

Any element of the quotient ring $ R_{n,\lambda}$ has a unique representative
of degree at most $n-1$: $a(X)=a_0+a_1X+\cdots+a_{n-1}X^{n-1}$.
We always associate any word
$a=(a_0,a_1,\cdots,a_{n-1})\in{\Bbb F}_q^n $ with
$a(X)=a_0+a_1X+\cdots+a_{n-1}X^{n-1}$ of the ring $R_{n,\lambda}$,
and {\it vice versa}. Hence the Hamming weight ${\rm w}(a(X))$
for $a(X)\in R_{n,\lambda}$ and the minimal weight ${\rm w}(C)$
for $C\le R_{n,\lambda}$ are defined as usual.

By Remark \ref{galois},
there is a unique $\lambda'\in{\Bbb F}_q$ such that
$\lambda=\gamma_{p^{\nu_p(n)}}(\lambda')
={\lambda'}^{p^{\nu_p(n)}}$, hence
${\rm ord}_{{\Bbb F}_q^*}(\lambda')
 ={\rm ord}_{{\Bbb F}_q^*}(\lambda)=r$.
Note that $\gcd(q,n'r)=1$,
where $\gcd$(-,-) denotes the greatest common divisor.
In the following we always assume that:
\begin{itemize}
\item
$\theta$  is a primitive  $n'r$-th  root of unity in
${\Bbb F}_{q^d} $ (with $d={\rm ord}_{{\Bbb Z}_{n'r}^*}(q)$)
such that $\theta^{n'}=\lambda'$ (equivalently, $\theta^{n}=\lambda$),
where ${\Bbb Z}_{n'r}^*$ denotes the
multiplicative group consisting of units of the residue ring ${\Bbb Z}_{n'r}$.
\item
$1+r{\Bbb Z}_{n'r}$ is
the subset of $ {\Bbb Z}_{n'r}$  as follows:
$$
 1+r{\Bbb Z}_{n'r}
 =\big\{1+rk\hskip-6pt\pmod{n'r}\,\big|\,k\in{\Bbb Z}_{n'r}\big\}
  =\{1,1+r,\cdots,1+r(n'-1)\}.
$$
\end{itemize}
It is easy to check that $\theta^i$ for all $i\in (1+r{\Bbb Z}_{n'r})$
are all roots of $X^{n'}-\lambda'$. In ${\Bbb F}_{q^d}[X]$
we have the following decomposition:
\begin{equation}\label{in F_{q^d}}
 X^n-\lambda=(X^{n'}-\lambda')^{p^{\nu_p(n)}}
 =\prod\limits_{i\in (1+r{\Bbb Z}_{n'r})}(X-\theta^i)^ {p^{\nu_p(n)}}.
\end{equation}

Let $s$ be an integer with $\gcd(s,n'r)=1$. Then $s$ induces a bijection 
\begin{equation}\label{mu_s}
\mu_s:~~ 1+r{\Bbb Z}_{n'r}~\to~ s+r{\Bbb Z}_{n'r},~~
 k\mapsto sk \pmod{n'r},
\end{equation}
where
\begin{equation}\label{s+}
 s+r{\Bbb Z}_{n'r}
 =\big\{s+rk\hskip-5pt\pmod{n'r}\,\big|\,k\in{\Bbb Z}_{n'r}\big\},
\end{equation}
and $\theta^i$ for all $i\in (s+r{\Bbb Z}_{n'r})$
are all roots (with multiplicity $p^{\nu_p(n)}$) of $X^{n}-\lambda^s$.

It is easy to see that $s+r{\Bbb Z}_{n'r}=1+r{\Bbb Z}_{n'r}$
if and only if $s\equiv 1\!\pmod{r}$.
Assume that $s\equiv 1\!\pmod{r}$.
Then $\mu_s$ is a permutation of $1+r{\Bbb Z}_{n'r}$.
Any orbit of the permutation is called an 
{\em $s$-orbit} on $1+r{\Bbb Z}_{n'r}$.
In fact, for any integer $t$ coprime to $n'r$, 
the $\mu_s$ (with $s\equiv1\!\pmod r$)
is a permutation of the set $t+r{\Bbb Z}_{n'r}$, 
which is then partitioned into $s$-orbits.

\begin{Remark}\label{q-coset}\rm
(i)~ Since $r\,\big|\,(q-1)$, we have $\gcd(q,n'r)=1$
and $q\equiv 1\!\pmod r$.
The $q$-orbits on $1+r{\Bbb Z}_{n'r}$ are also named
{\em $q$-cyclotomic cosets}, or {\em $q$-cosets} in short. 
The quotient set consisting of $q$-cosets on $1+r{\Bbb Z}_{n'r}$
is denoted by $(1+r{\Bbb Z}_{n'r})/\mu_q$.

(ii)~ For an integer $s$ with $\gcd(s,n'r)=1$ and $s\equiv 1\!\pmod r$,
the permutation $\mu_s$ on $1+r{\Bbb Z}_{n'r}$
induces a permutation, denote by $\mu_s$ again,
of the quotient set $(1+r{\Bbb Z}_{n'r})/\mu_q$,
and partitions the quotient set into $s$-orbits; cf \cite[Lemma 8]{CDFL}.
That is, for any $Q\in (1+r{\Bbb Z}_{n'r})/\mu_q$,
$sQ=\{sk\,|\,k\in Q\}$ is still a $q$-coset, and
there is a positive integer $\ell$ such that
$s^iQ\ne s^jQ$ if $0\le i\ne j<\ell$, but $s^\ell Q=Q$;
then $Q$, $sQ$, $\cdots$, $s^{\ell-1}Q$ form
an $s$-orbit of {length} $\ell$ on the quotient set.

Example \ref{ex q=25} in Section 7 is a non-trivial example 
for the above notations.
\end{Remark}

Then we further define the polynomials $f_Q(X)$ for $q$-cosets $Q$'s 
and get a decomposition 
as follows: 
\begin{equation}\label{f_Q}
  X^n-\lambda=\prod\limits_{Q\in (1+r{\Bbb Z}_{n'r})/{\mu_q}}
  f_Q(X)^ {p^{\nu_p(n)}} ~~~
  \mbox{where}~ f_Q(X)=\prod\limits_{i\in Q}(X-\theta^i),
\end{equation}
and $f_Q(X)$ for all $Q\in (1+r{\Bbb Z}_{n'r})/{\mu_q}$
are irreducible ${\Bbb F}_{q}$-polynomials.

\begin{Definition}\label{def q-coset function}\rm
Let $[0,p^{\nu_p(n)}]$ be the integer interval $\{0,1,\cdots,p^{\nu_p(n)}\}$.
\begin{itemize}
\item[(i)]
A map $\varphi:1+r{\Bbb Z}_{n'r}\rightarrow[0,p^{\nu_p(n)}]$ is called a
{\em $q$-coset function} if for any $k\in  1+r{\Bbb Z}_{n'r}$
and any integer $i$ we have $\varphi(q^i k)=\varphi(k)$.
Thus, any $q$-coset function
$\varphi:1+r{\Bbb Z}_{n'r}\rightarrow[0,p^{\nu_p(n)}]$
is identified with a function (denoted by $\varphi$ again) on the quotient set:
$$  \varphi:(1+r{\Bbb Z}_{n'r})/{\mu_q}\to [0,p^{\nu_p(n)}],~
 Q\mapsto \varphi(Q), ~
 \mbox{where $\varphi(Q)\!=\!\varphi(k)$ for $k\in Q$};
$$
then a polynomial $f_\varphi(X)\in{\Bbb F}_q[X]$ can be defined as follows:
$$ f_\varphi(X)=\prod\limits_{Q\in (1+r{\Bbb Z}_{n'r})/\mu_q}f_Q(X)^ {\varphi(Q)}.  $$
\item[(ii)]
For any $q$-coset function $\varphi:1+r{\Bbb Z}_{n'r}\rightarrow[0,p^{\nu_p(n)}]$,
we define a function
$\bar{\varphi}:1+r{\Bbb Z}_{n'r}\rightarrow[0,p^{\nu_p(n)}]$
by $\bar{\varphi}(k)=p^{\nu_p(n)}-{\varphi}(k)$.
The function $\bar{\varphi}$ is also a $q$-coset function, which
we call by the {\em complement of $\varphi$}.
\item[(iii)]
For any integer $s$ coprime to $n'r$ and any $q$-coset function
$\varphi:1+r{\Bbb Z}_{n'r}\rightarrow[0,p^{\nu_p(n)}]$,
define a function
$s\varphi:s+r{\Bbb Z}_{n'r}\rightarrow[0,p^{\nu_p(n)}]$ by
$$ (s\varphi)(k)=\varphi(s^{-1}k),~~~~ \forall~ k\in s+r{\Bbb Z}_{n'r}.$$
where $s^{-1}$ is an integer such that
$s^{-1}s\equiv 1\!\pmod{n'r}$.
It is easy to check that $s\varphi$ is still a $q$-coset function.
\item[(iv)]
Let $\varphi,\varphi':1+r{\Bbb Z}_{n'r}\rightarrow[0,p^{\nu_p(n)}]$
be $q$-coset functions. Define
$$
 (\varphi\cap\varphi')(k)=
  \min\{\varphi(k),\varphi'(k)\}, ~~~~  \forall~ k\in 1+r{\Bbb Z}_{n'r}.
$$
The function $ \varphi\cap\varphi'$ is clearly a $q$-coset function too.
Further, if $ \varphi\cap\varphi'=\varphi$, then we write $ \varphi\leq\varphi'$.
\end{itemize}
\end{Definition}

By Eqn \eqref{f_Q} and the definition of $\varphi\cap\bar\varphi$, we have
\begin{equation}
f_\varphi(X)f_{\bar{\varphi}}(X)=X^n-\lambda, ~~~~
\gcd\big(f_\varphi(X),f_{\bar{\varphi}}(X)\big)
 =f_{\varphi\cap\bar\varphi}(X).
\end{equation}

It is a routine to verify the following two lemmas. 

\begin{Lemma}\label{generator}
For any $\lambda$-constacyclic code $C\leq R_{n,\lambda}$
there is exactly one  $q$-coset function
$\varphi:(1+r{\Bbb Z}_{n'r})/\mu_q\rightarrow[0,p^{\nu_p(n)}]$
satisfying the following:
\begin{itemize}
\item[\bf(i)]
$c(X)\in C$ if and only if $c(X)f_{\varphi}(X)\equiv 0\!\pmod{X^n-\lambda}$.
\item[\bf(ii)]
$c(X)\in C$ if and only if $f_{\bar{\varphi}}(X)\,|\,c(X)$.
\end{itemize}
\end{Lemma}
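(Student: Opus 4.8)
The plan is to reduce everything to the principal-ideal structure of $\mathbb{F}_q[X]$. Since $\mathbb{F}_q[X]$ is a PID and $R_{n,\lambda}=\mathbb{F}_q[X]/\langle X^n-\lambda\rangle$, every ideal $C\le R_{n,\lambda}$ is the image of a principal ideal $\langle g(X)\rangle$ with $g(X)$ a unique monic divisor of $X^n-\lambda$; the sought $q$-coset function will simply record the multiplicities of the irreducible factors $f_Q(X)$ in that generator.

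First I would set up the ideal correspondence. Writing $\pi:\mathbb{F}_q[X]\to R_{n,\lambda}$ for the quotient map, $\pi^{-1}(C)$ is an ideal of $\mathbb{F}_q[X]$ containing $\ker\pi=\langle X^n-\lambda\rangle$, hence $\pi^{-1}(C)=\langle g(X)\rangle$ with $g(X)$ monic and $g(X)\mid X^n-\lambda$. Combining the factorization \eqref{f_Q} with unique factorization in $\mathbb{F}_q[X]$, I would write $g(X)=\prod_Q f_Q(X)^{e_Q}$ with each $e_Q\in[0,p^{\nu_p(n)}]$; because the $f_Q(X)$ are distinct monic irreducibles, the exponents $e_Q$ are forced, and setting $\bar{\varphi}(Q)=e_Q$, equivalently $\varphi(Q)=p^{\nu_p(n)}-e_Q$, produces a $q$-coset function with $g(X)=f_{\bar{\varphi}}(X)$.

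Part (ii) then drops out of the correspondence: $c(X)\in C$ iff $c(X)\in\pi^{-1}(C)=\langle g(X)\rangle$, i.e.\ iff $f_{\bar{\varphi}}(X)\mid c(X)$. For part (i) I would invoke the identity $f_\varphi(X)f_{\bar{\varphi}}(X)=X^n-\lambda$ recorded just before the lemma. If $f_{\bar{\varphi}}(X)\mid c(X)$, then $c(X)f_\varphi(X)$ is divisible by $f_{\bar{\varphi}}(X)f_\varphi(X)=X^n-\lambda$, giving one implication; conversely, if $X^n-\lambda=f_\varphi(X)f_{\bar{\varphi}}(X)$ divides $c(X)f_\varphi(X)$, cancelling the nonzero factor $f_\varphi(X)$ in the integral domain $\mathbb{F}_q[X]$ yields $f_{\bar{\varphi}}(X)\mid c(X)$. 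Hence (i) and (ii) describe the same membership condition.

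The argument is, as the authors note, routine; the only genuine point of care is uniqueness together with the non-semisimple case $p\mid n$, where $X^n-\lambda$ has the repeated factors $f_Q(X)^{p^{\nu_p(n)}}$. I expect no real obstacle here, since the monic generator of a principal ideal is unique, pinning down the $e_Q$ (and hence $\varphi$) unambiguously, and unique factorization in $\mathbb{F}_q[X]$ never uses separability of $X^n-\lambda$. Conceptually, the same bookkeeping is visible through the Chinese Remainder decomposition $R_{n,\lambda}\cong\prod_Q\mathbb{F}_q[X]/\langle f_Q(X)^{p^{\nu_p(n)}}\rangle$ into local chain rings, in which an ideal is precisely a choice of one exponent $e_Q$ per factor --- that is, exactly a $q$-coset function.
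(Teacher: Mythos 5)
Your proof is correct, and it fills in exactly the argument the paper has in mind: the authors state this lemma without proof (``It is a routine to verify\dots''), and the routine verification is precisely your ideal-correspondence argument --- monic divisors of $X^n-\lambda$ classify the ideals of $R_{n,\lambda}$, the factorization \eqref{f_Q} converts such a divisor into the exponent data $\bar\varphi(Q)=e_Q$, and the identity $f_\varphi(X)f_{\bar\varphi}(X)=X^n-\lambda$ makes conditions (i) and (ii) equivalent. Your closing observation that uniqueness of the monic generator forces uniqueness of $\varphi$, and that none of this requires separability of $X^n-\lambda$, correctly addresses the only subtlety of the repeated-root case.
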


\begin{Definition}\label{check polynomial}\rm
As usual, $f_{\varphi}(X)$ in Lemma \ref{generator}
is called a {\em check polynomial} of the $\lambda$-constacyclic code $C$,
and $f_{\bar{\varphi}}(X)$ is called a {\em generator polynomial} of $C$.
Because of the uniqueness of the $q$-coset function $\varphi$,
we denote the $\lambda$-constacyclic code $C$ by $C_\varphi$,
and call it the $\lambda$-constacyclic code with check polynomial $f_\varphi(X)$.
\end{Definition}

For any $C\le R_{n,\lambda}$, we set
\begin{equation}\label{Ann}
{\rm Ann}(C)=\big\{ a(X)\in R_{n,\lambda}\,\big|\,
   a(X)c(X)\!\equiv\! 0\hskip-6pt \pmod{X^n-\lambda},~ \forall~ c(X)\in C \big\}.
\end{equation}
which is an ideal of $R_{n,\lambda}$, i.e., is a $\lambda$-constacyclic code too.

\begin{Lemma}\label{Ann=bar}
 Let $C_\varphi\leq R_{n,\lambda}$,
 where $\varphi:(1+r{\Bbb Z}_{n'r})/\mu_q\rightarrow[0,p^{\nu_p(n)}]$
 is a $q$-coset function.
 Then ${\rm Ann}(C_\varphi)=C_{\bar{\varphi}}$.
\end{Lemma}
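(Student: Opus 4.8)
The plan is to reduce the annihilator condition, which a priori must be tested against every codeword of $C_\varphi$, to a single divisibility condition involving $f_{\bar\varphi}(X)$, and then to recognize that condition as membership in $C_{\bar\varphi}$ via Lemma~\ref{generator}(i).

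First I would observe that $C_\varphi$ is the principal ideal of $R_{n,\lambda}$ generated by its generator polynomial $f_{\bar\varphi}(X)$. By Lemma~\ref{generator}(ii) every codeword $c(X)\in C_\varphi$ is divisible by $f_{\bar\varphi}(X)$ in ${\Bbb F}_q[X]$; conversely, because $f_\varphi(X)f_{\bar\varphi}(X)=X^n-\lambda$, reducing any multiple $f_{\bar\varphi}(X)b(X)$ modulo $X^n-\lambda$ only subtracts multiples of $f_\varphi(X)f_{\bar\varphi}(X)$ and hence preserves divisibility by $f_{\bar\varphi}(X)$, so the reduced representative still lies in $C_\varphi$. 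Thus $C_\varphi=f_{\bar\varphi}(X)R_{n,\lambda}$.

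Next I would use this to simplify the annihilator. Since the requirement $a(X)c(X)\equiv 0\pmod{X^n-\lambda}$ for all $c(X)\in C_\varphi$ is equivalent, by the ideal (module) structure, to the single requirement $a(X)f_{\bar\varphi}(X)\equiv 0\pmod{X^n-\lambda}$ (necessity by taking $c(X)=f_{\bar\varphi}(X)$; sufficiency because every codeword has the form $f_{\bar\varphi}(X)b(X)$), we obtain
\[
 {\rm Ann}(C_\varphi)=\big\{a(X)\in R_{n,\lambda}\,\big|\,a(X)f_{\bar\varphi}(X)\equiv 0\!\!\pmod{X^n-\lambda}\big\}.
\]
Finally, applying Lemma~\ref{generator}(i) to the $q$-coset function $\bar\varphi$, whose associated check polynomial is $f_{\bar\varphi}(X)$, the right-hand condition $a(X)f_{\bar\varphi}(X)\equiv 0\pmod{X^n-\lambda}$ is exactly the statement $a(X)\in C_{\bar\varphi}$. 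This yields ${\rm Ann}(C_\varphi)=C_{\bar\varphi}$.

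These computations are all routine, so I do not expect a genuine obstacle; the only point needing care is the reduction from ``annihilates every codeword'' to ``annihilates the generator $f_{\bar\varphi}(X)$'', which rests on identifying $C_\varphi$ as the principal ideal generated by $f_{\bar\varphi}(X)$ and on the factorization $f_\varphi f_{\bar\varphi}=X^n-\lambda$ to keep divisibility stable under reduction modulo $X^n-\lambda$.
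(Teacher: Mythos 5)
Your proof is correct. The paper itself offers no argument for this lemma (it is dismissed as ``routine to verify'' alongside Lemma~\ref{generator}), and your write-up is exactly the standard verification one would supply: identify $C_\varphi$ with the principal ideal generated by $f_{\bar\varphi}(X)$, reduce the annihilator condition to the single condition $a(X)f_{\bar\varphi}(X)\equiv 0 \pmod{X^n-\lambda}$, and recognize that condition as membership in $C_{\bar\varphi}$ via Lemma~\ref{generator}(i) applied to $\bar\varphi$; all steps, including the care taken with reduction modulo $X^n-\lambda=f_\varphi(X) f_{\bar\varphi}(X)$, are sound.
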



 \section{Ring isometries}

For any non-zero integer $s$,
by $\nu_p(s)$  we denote the $p$-adic valuation of $s$, hence
$s=p^{\nu_p(s)}s'$ with $p\!\nmid\! s'$.
If $\gcd(s,n'r)=1$ then $\gcd(s',nr)=1$ obviously.

\begin{Theorem}\label{ring iso}
Assume that $\gcd(s,n'r)=1$, $s=p^{\nu_p(s)}s'$ and
$s'^{-1}$ is an integer such that $s'^{-1}s'\equiv 1\!\pmod{nr}$. Then the map
${\cal M}_s:  R_{n,\lambda}\to R_{n,\lambda^s}$ defined by
\begin{equation}\label{M_s}
{\cal M}_s\Big(\sum_{i=0}^{n-1}a_iX^i\Big)=
 \sum_{i=0}^{n-1}a_i^{p^{\nu_p(s)}}X^{is'^{-1}}\hskip-8pt
 \pmod{X^n-\lambda^s},~~~
  \forall~ \sum_{i=0}^{n-1}a_iX^i\in R_{n,\lambda},
\end{equation}
is well-defined (independent of the choice of $s'^{-1}$) and the following hold:
\begin{enumerate}
  \item[\bf(i)] ${\cal M}_s$ is a ring isomorphism.
  \item[\bf(ii)]${\rm w}\big( {\cal M}_s(a(X))\big)={\rm w}\big((a(X))\big)$
  for all $a(X)\in R_{n,\lambda}$.
\end{enumerate}
\end{Theorem}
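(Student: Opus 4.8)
The plan is to realize $\mathcal{M}_s$ as a composition of two elementary ring homomorphisms on the polynomial ring and then show it factors through the quotient. First I would introduce the coefficient-wise Frobenius $\sigma:\mathbb{F}_q[X]\to\mathbb{F}_q[X]$, $\sum a_iX^i\mapsto\sum a_i^{p^{\nu_p(s)}}X^i$, which is a ring homomorphism because $\gamma_{p^{\nu_p(s)}}$ is a field automorphism of $\mathbb{F}_q$; and the substitution homomorphism $\tau:\mathbb{F}_q[X]\to\mathbb{F}_q[X]$ determined by $X\mapsto X^{s'^{-1}}$. Composing $\tau\circ\sigma$ and then reducing modulo $X^n-\lambda^s$ reproduces exactly the formula in \eqref{M_s}, so the candidate is a ring homomorphism $\mathbb{F}_q[X]\to R_{n,\lambda^s}$ before passing to the quotient.

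The crux is to verify that this homomorphism kills $X^n-\lambda$, so that it descends to $R_{n,\lambda}$, and simultaneously that the value is independent of the representative $s'^{-1}\bmod nr$. Both facts reduce to showing that the image of $X^n-\lambda$ lies in $\langle X^n-\lambda^s\rangle$. I would compute that image as $X^{ns'^{-1}}-\lambda^{p^{\nu_p(s)}}$; working in $R_{n,\lambda^s}$ where $X^n=\lambda^s$, it becomes $\lambda^{ss'^{-1}}-\lambda^{p^{\nu_p(s)}}$. Writing $ss'^{-1}=p^{\nu_p(s)}s's'^{-1}$ and using $s's'^{-1}\equiv1\pmod{r}$ together with $\mathrm{ord}_{\mathbb{F}_q^*}(\lambda)=r$, the exponent collapses and $\lambda^{ss'^{-1}}=\lambda^{p^{\nu_p(s)}}$, so the image is zero. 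The same $\lambda^r=1$ cancellation handles independence of $s'^{-1}$: replacing $s'^{-1}$ by $s'^{-1}+nr$ multiplies each $X^{is'^{-1}}$ by $(\lambda^s)^{ir}=\lambda^{sir}=1$. This $\lambda^r=1$ bookkeeping, matching the valuation $\nu_p(s)$ against the Frobenius exponent, is the step I expect to be the most delicate.

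For (i), I would establish bijectivity through surjectivity and a dimension count. Since $\gcd(s'^{-1},nr)=1$ forces $\gcd(s'^{-1},n)=1$, multiplication by $s'^{-1}$ permutes $\mathbb{Z}/n\mathbb{Z}$, so the images $\mathcal{M}_s(X^i)=X^{is'^{-1}}$ reduce modulo $X^n-\lambda^s$ to nonzero scalar multiples of the monomials $X^{j}$, with $j$ running over all of $\{0,\dots,n-1\}$. Because $a\mapsto a^{p^{\nu_p(s)}}$ is a bijection of $\mathbb{F}_q$, the image of $\mathbb{F}_qX^i$ is all of $\mathbb{F}_qX^{j}$, so $\mathcal{M}_s$ is onto. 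As $\mathcal{M}_s$ is additive and both $R_{n,\lambda}$ and $R_{n,\lambda^s}$ have $\mathbb{F}_p$-dimension $ne$, surjectivity forces bijectivity, and a bijective ring homomorphism is a ring isomorphism. Finally, for (ii) the same permutation $i\mapsto is'^{-1}\bmod n$, together with the nonvanishing of the scalars $\lambda^{s\lfloor is'^{-1}/n\rfloor}$ and the injectivity of the Frobenius, puts the nonzero coordinates of $a(X)$ and of $\mathcal{M}_s(a(X))$ in bijective correspondence, whence $\mathrm{w}\big(\mathcal{M}_s(a(X))\big)=\mathrm{w}\big(a(X)\big)$.
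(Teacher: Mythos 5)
Your proposal is correct and follows essentially the same route as the paper's proof: you factor $\mathcal{M}_s$ through $\mathbb{F}_q[X]$ as coefficient-wise Frobenius composed with the substitution $X\mapsto X^{s'^{-1}}$, verify that $X^n-\lambda$ maps into $\langle X^n-\lambda^s\rangle$ via the computation $\lambda^{ss'^{-1}}=\lambda^{p^{\nu_p(s)}}$ (using $s's'^{-1}\equiv 1\pmod r$ and $\lambda^r=1$), and then obtain bijectivity from surjectivity plus equal (finite) size. Your treatment of the independence of $s'^{-1}$ and of the weight preservation in (ii) is in fact slightly more explicit than the paper's, which disposes of both points in a single sentence each.
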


\begin{proof}
Mapping $a$ to $a^{p^{\nu_p(s)}}$ is an automorphism of ${\Bbb F}_q$, 
see Remark~\ref{galois}. It is obvious that the following map
$$
 {\Bbb F}_q [X]\rightarrow{\Bbb F}_q [X],~~~
 \sum\limits_{i=0}^{k}a_iX^i\mapsto\sum_{i=0}^{k}a_i^{p^{\nu_p(s)}}X^{is'^{-1}}
$$
is a ring homomorphism; hence it induces a ring homomorphism:
 \begin{equation*}
\begin{array}{rcl}
 \widehat{\cal M}_s:{\Bbb F}_q [X] &\rightarrow&
  {\Bbb F}_q [X]/\langle X^n-\lambda^s\rangle,\\
 \sum\limits_{i=0}^{k}a_iX^i & \rightarrow &
  \sum\limits_{i=0}^{k}a_i^{p^{\nu_p(s)}}X^{is'^{-1}} \pmod{X^n-\lambda^s}.
\end{array}
\end{equation*}
In the ring $R_{n,\lambda^s}= {\Bbb F}_q [X]/\langle X^n-\lambda^s\rangle$
 we have the following computation:
$$\widehat{\cal M}_s(X^n-\lambda)=X^{ns'^{-1}}\!-\lambda^{p^{\nu_p(s)}}
\!\equiv \,(\lambda^{p^{\nu_p(s)}s'})^{s'^{-1}}\!-\lambda^{p^{\nu_p(s)}}
\!= 0\! \pmod{X^n-\lambda^s}.$$
Thus the ring homomorphism $\widehat{\cal M}_s$
induces a well-defined ring homomorphism as follows:
\begin{equation*}
\begin{array}{rcl}
{\cal M}_s:{\Bbb F}_q [X]/\langle X^n-\lambda\rangle &\rightarrow&
  {\Bbb F}_q [X]/\langle X^n-\lambda^s\rangle,\\
\sum\limits_{i=0}^{n-1}a_iX^i & \rightarrow & \sum\limits_{i=0}^{n-1}a_i^{p^{\nu_p(s)}}X^{is'^{-1}} \pmod{X^n-\lambda^s}.
\end{array}
\end{equation*}
Because $s'^{-1}$ is unique modulo $nr$ and $\lambda^{nr}=1$,
${\cal M}_s$ is independent of the choice of the integer $s'^{-1}$
such that $s'^{-1}s'\equiv 1\!\pmod{nr}$.
Since $\gcd(s'^{-1},n)=1$, for any $j$ we can find an $i$
such that $is'^{-1}\equiv j\!\pmod n$, i.e.,
$is'^{-1}=nt+j$ for an integer $t$.
Further, there is an $a\in {\Bbb F}_q$  such that
$a^{p^{\nu_p(s)}}=\lambda^{-st}$.
Then in the ring ${\Bbb F}_q [X]/\langle X^n-\lambda^s\rangle$ we have
$$ {\cal M}_s(aX^i)=a^{p^{\nu_p(s)}}X^{is'^{-1}}
 \equiv a^{p^{\nu_p(s)}}\lambda^{st}X^j=X^j  \pmod{X^n-\lambda^s}.
$$
 Thus  the ring homomorphism  ${\cal M}_s$ is surjective.
 Further, the cardinalities of ${\Bbb F}_q [X]/\langle X^n-\lambda\rangle$
 and ${\Bbb F}_q [X]/\langle X^n-\lambda^s\rangle$
 are equal to each other. So ${\cal M}_s$ is a ring isomorphism, i.e.,
 (i) holds. Finally, (ii) holds obviously.
\end{proof}

\begin{Remark}\rm
We call the ${\cal M}_s$ defined in Theorem \ref{ring iso}
a {\em ring isometry} from $R_{n,\lambda}$ to $R_{n,\lambda^s}$.
Note that the isometries between constacyclic codes appeared in literature,
e.g., in \cite{CFLL, Bl13, CDFL, FZl}, are defined only for the semisimple case
(i.e., $\gcd(n,q)=1$) and are algebra isomorphism.
The ring isometries ${\cal M}_s$ in Theorem~\ref{ring iso}
are defined for the general case (where ``$p\,|\,n$'' is allowed),
and are semi-linear algebra isomorphisms in general, i.e.,
they are isomorphisms of rings and 
semi-linear isomorphisms of vector spaces.
Precisely, ${\cal M}_s$ is a 
$\gamma_{p^{\nu_p(s)}}$-linear isomorphism, 
where $\gamma_{p^{\nu_p(s)}}$ is the 
Galois automorphism defined in Remark \ref{galois}.
For any constacyclic code $C\le R_{n,\lambda}$,
by the semi-linearity of ${\cal M}_s$, we still have
$\dim_{{\Bbb F}_q}\big({\cal M}_s(C)\big)=\dim_{{\Bbb F}_q}(C)$.
\end{Remark}

\begin{Lemma}\label{s_1=s_2?}
Let $s_1$ and $s_2$ be integers coprime to $n'r$,
let $s_1=p^{\nu_p(s_1)}s_1'$ and $s_2=p^{\nu_p(s_2)}s_2'$.
Then the following two are equivalent:
\begin{itemize}
\item[\bf(i)]
${\cal M}_{s_1}={\cal M}_{s_2}$.
\item[\bf(ii)]
$s_1'\equiv s_2'\!\pmod{nr}$~ and~
$\nu_p(s_1)\equiv \nu_p(s_2)\!\pmod{e}$.
\end{itemize}
\end{Lemma}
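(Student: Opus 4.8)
The plan is to use that, by Theorem \ref{ring iso} and the following Remark, each ${\cal M}_s$ is a $\gamma_{p^{\nu_p(s)}}$-semilinear ring isomorphism, so it is pinned down by just three pieces of data: its codomain $R_{n,\lambda^s}$, its restriction to the scalar subfield ${\Bbb F}_q\subseteq R_{n,\lambda}$ (the automorphism $a\mapsto a^{p^{\nu_p(s)}}$), and the single image ${\cal M}_s(X)=X^{s'^{-1}}$. Indeed ${\cal M}_s\big(\sum_i a_iX^i\big)=\sum_i a_i^{p^{\nu_p(s)}}\big(X^{s'^{-1}}\big)^i$, since $R_{n,\lambda}$ is generated as a ring by ${\Bbb F}_q$ and $X$. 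Hence ${\cal M}_{s_1}={\cal M}_{s_2}$ is equivalent to the conjunction of: (a) $\lambda^{s_1}=\lambda^{s_2}$ (equal codomains); (b) $a^{p^{\nu_p(s_1)}}=a^{p^{\nu_p(s_2)}}$ for all $a\in{\Bbb F}_q$; and (c) $X^{s_1'^{-1}}=X^{s_2'^{-1}}$ in the common codomain. I will match (b) and (c) to the two arithmetic conditions in (ii).

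The scalar part (b) is handled by Remark \ref{galois}: the Galois group of ${\Bbb F}_q$ over ${\Bbb F}_p$ is cyclic of order $e$ generated by $\gamma_p$, with $\gamma_{p^{h_1}}=\gamma_{p^{h_2}}$ exactly when $h_1\equiv h_2\!\pmod e$; thus (b) holds iff $\nu_p(s_1)\equiv\nu_p(s_2)\!\pmod e$, the second clause of (ii). For the value at $X$, I first record that ${\rm ord}_{{\Bbb F}_q^*}(\lambda^{s_i})=r$, because $\gcd(s_i,n'r)=1$ forces $\gcd(s_i,r)=1$. Writing $s_i'^{-1}=na_i+j_i$ with $0\le j_i<n$, reduction modulo $X^n-\lambda^{s_1}$ gives $X^{s_i'^{-1}}\equiv(\lambda^{s_1})^{a_i}X^{j_i}$ in the common codomain. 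Since $1,X,\dots,X^{n-1}$ is an ${\Bbb F}_q$-basis, (c) is equivalent to $j_1=j_2$ together with $(\lambda^{s_1})^{a_1}=(\lambda^{s_1})^{a_2}$: the former says $s_1'^{-1}\equiv s_2'^{-1}\!\pmod n$, and the latter, via ${\rm ord}_{{\Bbb F}_q^*}(\lambda^{s_1})=r$, says $r\mid(a_1-a_2)$. Combining, $s_1'^{-1}-s_2'^{-1}=n(a_1-a_2)$ is a multiple of $nr$, i.e. $s_1'^{-1}\equiv s_2'^{-1}\!\pmod{nr}$, which (after inverting modulo $nr$, legitimate since $\gcd(s_i',nr)=1$) is equivalent to $s_1'\equiv s_2'\!\pmod{nr}$, the first clause of (ii).

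These computations dispatch both implications. For (i)$\Rightarrow$(ii) there is nothing extra: ${\cal M}_{s_1}={\cal M}_{s_2}$ already forces the codomains to agree, so (b) and (c) are available verbatim and yield (ii) by the previous paragraph. For (ii)$\Rightarrow$(i) I must instead \emph{derive} (a): writing $\lambda^{s_i}=(\lambda^{s_i'})^{p^{\nu_p(s_i)}}=\gamma_{p^{\nu_p(s_i)}}(\lambda^{s_i'})$ and using $s_1'\equiv s_2'\!\pmod r$ (from $s_1'\equiv s_2'\!\pmod{nr}$, hence $\lambda^{s_1'}=\lambda^{s_2'}$) together with $\gamma_{p^{\nu_p(s_1)}}=\gamma_{p^{\nu_p(s_2)}}$, I obtain $\lambda^{s_1}=\lambda^{s_2}$; then (b) and (c) follow in the now-common codomain by reading the equivalences of paragraph two in reverse. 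The only genuinely delicate point is the upgrade of the exponent congruence from $\pmod n$ to $\pmod{nr}$ in (c): naive exponent reduction only controls $X^{s'^{-1}}$ modulo $n$, and it is precisely the order computation ${\rm ord}_{{\Bbb F}_q^*}(\lambda^{s_1})=r$ — resting on $\gcd(s,r)=1$ coming from $\gcd(s,n'r)=1$ — that recovers the extra factor $r$ through the scalar relation $(\lambda^{s_1})^{a_1-a_2}=1$. Everything else is routine bookkeeping with the basis $\{X^j\}$ and the semilinearity of ${\cal M}_s$.
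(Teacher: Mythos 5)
Your proof is correct and follows essentially the same route as the paper's: compare the two maps on the scalar subfield (via Remark \ref{galois}, giving $\nu_p(s_1)\equiv\nu_p(s_2)\pmod e$) and at $X$ (writing $s_i'^{-1}=na_i+j_i$, reducing modulo $X^n-\lambda^{s_1}$, and using $\gcd(s_i,r)=1$ so that the order of $\lambda^{s_1}$ is $r$, giving $s_1'\equiv s_2'\pmod{nr}$). The only difference is that you explicitly work out the sufficiency direction, deriving $\lambda^{s_1}=\lambda^{s_2}$ from (ii), which the paper dismisses as obvious.
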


\begin{proof}
Let $s_1'^{-1}$, $s_2'^{-1}$ be integers with
$s_1'^{-1}s_1'\equiv 1\!\pmod{nr}$, $s_2'^{-1}s_2'\equiv 1\!\pmod{nr}$.
Suppose that ${\cal M}_{s_1}={\cal M}_{s_2}$. Then
$\lambda^{s_1}=\lambda^{s_2}$, hence $s_1\equiv s_2\pmod{r}$.
And, ${\cal M}_{s_1}(X)={\cal M}_{s_2}(X)$, i.e.,
$X^{{s'_1}^{-1}}\equiv X^{{s'_2}^{-1}}\!\pmod{X^n-\lambda^{s_1}}$.
Let ${s'_i}^{-1}=t_i n+k_i$  with $0\le k_i<n$ for $i=1,2$. Then
$X^{{s'_i}^{-1}}\equiv \lambda^{s_1 t_i}X^{k_i}\!
 \pmod{X^n-\lambda^{s_1}}$.
We get that $\lambda^{s_1 t_1}X^{k_1}=\lambda^{s_1 t_2}X^{k_2}$.
Then $t_1\equiv t_2\!\pmod r$ and $k_1=k_2$, which imply that
${s'_1}^{-1}\equiv {s'_2}^{-1}\!\pmod{nr}$, equivalently,
${s'_1}\equiv {s'_2}\!\pmod{nr}$.
Further,
${\cal M}_{s_1}(a)={\cal M}_{s_2}(a)$ for any $a\in{\Bbb F}_q$.
Then $a^{p^{\nu_p(s_1)}}=a^{p^{\nu_p(s_2)}}$ for any $a\in{\Bbb F}_q$;
so $\nu_p(s_1)\equiv\nu_p(s_2)\pmod{e}$, cf. Remark \ref{galois}.
The necessity is proved. The sufficiency is obvious.
\end{proof}

It is easy to see that 
\begin{equation}\label{s1 s2}
 {\cal M}_{s_1}{\cal M}_{s_2}={\cal M}_{s_1s_2},~~~~
 \mbox{for $s_1,s_2$ coprime to $n'r$.}
\end{equation}
Lemma~\ref{s_1=s_2?} implies that
the set of ring isometries
 $\big\{{\cal M}_s\,\big|\, \mbox{$s$ is coprime to $n'r$}\big\}$
 form a group which is isomorphic to ${\Bbb Z}_e\times{\Bbb Z}_{n'r}^*$.

\begin{Remark}\label{s+rZ}\rm We know that
$\lambda^s=\lambda$ if and only if  $s\equiv 1\!\pmod{r}$.
So $R_{n,\lambda^s}=R_{n,\lambda}$ 
if and only if $s\in 1+r{\Bbb Z}_{n'r}$.
At that case, for any $Q\in(1+r{\Bbb Z}_{n'r})/\mu_q$,
$sQ=\{sk\,|\,k\in Q\}$ is still a $q$-coset on $1+r{\Bbb Z}_{n'r}$.
However, if $s{\not\equiv 1}\!\pmod{r}$,
then $\theta^i$ for $i\in s+r{\Bbb Z}_{n'r}$ are
all roots of $X^{n'}-\lambda'^s$, cf. Eqn \eqref{s+}.
And, a $\mu_q$-action on $s+r{\Bbb Z}_{n'r}$ 
is defined the same as in  Eqn~\eqref{mu_s}
so that, for any $Q\in(1+r{\Bbb Z}_{n'r})/\mu_q$,
the $sQ$ is a {$q$-coset} on $s+r{\Bbb Z}_{n'r}$; cf. Ramark \ref{q-coset}.
Thus $Q\mapsto sQ$ is a bijective map 
from $(1+r{\Bbb Z}_{n'r})/\mu_q$ onto $(s+r{\Bbb Z}_{n'r})/\mu_q$; 
the converse map sends any $q$-coset $Q'\in (s+r{\Bbb Z}_{n'r})/\mu_q$ 
to $s^{-1}Q'=\{s^{-1}k'\,|\,k'\in Q'\}$, 
where $s^{-1}s\equiv 1\!\pmod{n'r}$
as in Definition \ref{def q-coset function} (iii).
\end{Remark}

\begin{Lemma}\label{M_s f_Q}
Let $s$ be an integer coprime to $n'r$. 
Then for any $q$-coset $Q\in(1+r{\Bbb Z}_{n'r})/\mu_q$ there is a unit
$u(X)\in R_{n,\lambda^s}$ such that
$$  {\cal M}_s\big(f_Q(X)\big)=u(X) f_{sQ}(X).$$
\end{Lemma}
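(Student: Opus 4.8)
The plan is to pin down the principal ideal generated by ${\cal M}_s(f_Q(X))$ in $R_{n,\lambda^s}$ and to show it coincides with $\langle f_{sQ}(X)\rangle$; the unit $u(X)$ then appears automatically. The structural tool is the Chinese Remainder decomposition coming from Eqn \eqref{f_Q}: since the $f_Q(X)^{p^{\nu_p(n)}}$ are pairwise coprime, $R_{n,\lambda}\cong\prod_Q A_Q$ with $A_Q={\Bbb F}_q[X]/\langle f_Q(X)^{p^{\nu_p(n)}}\rangle$ a finite chain ring whose ideals are exactly the powers $\langle f_Q\rangle^j$, $0\le j\le p^{\nu_p(n)}$; likewise $R_{n,\lambda^s}\cong\prod_{Q'}A'_{Q'}$ indexed by the $q$-cosets $Q'$ on $s+r{\Bbb Z}_{n'r}$ (cf. Remark \ref{s+rZ}). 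Each principal ideal of such a product is determined by its \emph{level profile}, the function recording, at each component, the exponent $j$ with the local ideal equal to $\langle f_{Q'}\rangle^j$.

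First I would exploit that ${\cal M}_s$ is a ring isomorphism (Theorem \ref{ring iso}). A ring isomorphism carries idempotents to idempotents, hence matches the indecomposable local factors $A_Q$ with the $A'_{Q'}$ through a bijection $\pi$, and inside each matched pair it carries the maximal ideal and its powers to the maximal ideal and its powers. Consequently ${\cal M}_s$ preserves level profiles. Since $f_Q(X)$ has level $1$ at its own component $Q$ and level $0$ (is a unit) at every other component, its image ${\cal M}_s(f_Q(X))$ has level $1$ at $\pi(Q)$ and level $0$ everywhere else; in particular ${\cal M}_s(f_Q(X))$ lies in one and only one maximal ideal, namely $\langle f_{\pi(Q)}(X)\rangle$.

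Next I would identify $\pi(Q)=sQ$ by a direct root computation. Lifting to ${\Bbb F}_q[X]$, the definition \eqref{M_s} represents ${\cal M}_s(f_Q(X))$ by $\sum_j c_j^{p^{\nu_p(s)}}X^{js'^{-1}}$, where $f_Q(X)=\sum_j c_jX^j$; applying the Galois automorphism $\gamma_{p^{\nu_p(s)}}$ of Remark \ref{galois} to the coefficients turns $\prod_{i\in Q}(Y-\theta^i)$ into $\prod_{i\in Q}(Y-\theta^{ip^{\nu_p(s)}})$, so this lift equals that product evaluated at $Y=X^{s'^{-1}}$. Take any $k\in sQ$, say $k\equiv si\pmod{n'r}$ with $i\in Q$. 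Since $\theta^{kn}=\lambda^s$, evaluating the class of ${\cal M}_s(f_Q(X))$ at $\theta^k$ agrees with evaluating this lift, and the factor indexed by $i$ is
\[
 (\theta^k)^{s'^{-1}}-\theta^{ip^{\nu_p(s)}}
 =\theta^{ks'^{-1}}-\theta^{ip^{\nu_p(s)}}=0,
\]
because $ks'^{-1}\equiv p^{\nu_p(s)}(s's'^{-1})i\equiv p^{\nu_p(s)}i\pmod{n'r}$ (using $s=p^{\nu_p(s)}s'$ and $s's'^{-1}\equiv1\pmod{n'r}$, valid as $n'r\mid nr$). Thus every $\theta^k$ with $k\in sQ$ is a root of ${\cal M}_s(f_Q(X))$; as $f_{sQ}(X)$ is the separable product of precisely these linear factors, $f_{sQ}(X)\mid {\cal M}_s(f_Q(X))$ in ${\Bbb F}_q[X]$. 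Hence the level of ${\cal M}_s(f_Q(X))$ at the component $sQ$ is $\ge 1$, and comparing with the previous paragraph forces $\pi(Q)=sQ$.

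Finally I would assemble the conclusion: ${\cal M}_s(f_Q(X))$ and $f_{sQ}(X)$ then have identical level profiles (level $1$ at $sQ$, level $0$ elsewhere), so they generate the same principal ideal of $R_{n,\lambda^s}$; in each local factor $A'_{Q'}$, two elements generating the same ideal are associates, and collecting these unit factors through the Chinese Remainder isomorphism produces a unit $u(X)\in R_{n,\lambda^s}$ with ${\cal M}_s(f_Q(X))=u(X)f_{sQ}(X)$. The delicate point—and the step I expect to be the real obstacle—is the multiplicity bookkeeping when $p\mid n$: a bare root check yields only divisibility, so the \emph{exact} level $1$ (hence the associate relation, not merely a factor relation) must be secured by the preservation of $\mathfrak m$-adic valuations under the ring isomorphism ${\cal M}_s$. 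As a sanity check, when $s=p^m$ one even gets $u=1$, since applying $\gamma_{p^m}$ to the coefficients of $f_Q$ returns $f_{p^mQ}$ verbatim; the genuine reduction modulo $X^n-\lambda^s$ intervenes only in the part of $s$ coprime to $p$, and one may split $s$ accordingly via ${\cal M}_{s_1}{\cal M}_{s_2}={\cal M}_{s_1s_2}$ from Eqn \eqref{s1 s2}.
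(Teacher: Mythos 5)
Your proof is correct, but it takes a genuinely different route from the paper's. The paper proceeds by explicit factorization: it computes ${\cal M}_s\big(f_Q(X)\big)=\prod_{i\in Q}\big(X^{s'^{-1}}-(\theta^{is})^{s'^{-1}}\big)$, pulls the factor $(X-\theta^{is})$ out of each term so that ${\cal M}_s\big(f_Q(X)\big)=f_{sQ}(X)\,u(X)$ with $u(X)=\prod_{i\in Q}\frac{X^{s'^{-1}}-(\theta^{is})^{s'^{-1}}}{X-\theta^{is}}$ written down explicitly, and then verifies that $u(X)$ is coprime to $X^n-\lambda^s$ (hence a unit of $R_{n,\lambda^s}$) by checking that no root $\theta^{js}$, $j\in 1+r{\Bbb Z}_{n'r}$, of $X^n-\lambda^s$ is a root of any of these quotient factors. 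You instead obtain $u(X)$ abstractly: the CRT decomposition of $R_{n,\lambda^s}$ into finite chain rings, the fact that a ring isomorphism matches primitive idempotents and carries maximal ideals to maximal ideals (hence preserves your level profiles), and then the root identity $\theta^{ks'^{-1}}=\theta^{ip^{\nu_p(s)}}$ for $k\equiv si$ --- which is exactly the identity driving the paper's computation as well --- used only to pin down $\pi(Q)=sQ$. Your approach buys automatic multiplicity bookkeeping in the repeated-root case: the exact level $1$ at the component $sQ$ is forced by the isomorphism, which is precisely the point you flag as delicate, so no separate coprimality check is needed; it would also survive replacing ${\cal M}_s$ by any other ring isomorphism compatible with evaluation at the $\theta^k$. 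The paper's approach buys elementarity and an explicit formula for $u(X)$: it needs nothing beyond polynomial division and roots, whereas yours invokes uniqueness of primitive idempotents, the chain-ring ideal lattice, and the local-ring fact that two generators of the same ideal are associates (true, as you use it, but it does require locality or finiteness rather than being a fact about arbitrary commutative rings). Both proofs are complete; yours is the more structural template, the paper's the more self-contained computation.
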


\begin{proof} Let $s=p^{\nu_p(s)}s'$ and $s'^{-1}$ be an integer such that
$s'^{-1}s'\equiv 1\!\pmod{nr}$.
Note that $f_Q(X)=\prod_{i\in Q}(X-\theta^i)$, see Eqn \eqref{f_Q},
and ${\cal M}_s$ is a ring isomorphism, see Theorem \ref{ring iso}. So
\begin{eqnarray*}
{\cal M}_s\big(f_Q(X)\big)
 &=&\prod\limits_{i\in Q}(X^{{s'}^{-1}}-\theta^{ip^{\nu_p(s)}})
    =\prod\limits_{i\in Q}(X^{{s'}^{-1}}-(\theta^{is})^{{s'}^{-1}})\\
  &=& \Big(\prod\limits_{i\in Q}(X-\theta^{is})\Big)\Big( \prod\limits_{i\in Q}
   \frac{X^{{s'}^{-1}}-(\theta^{is})^{{s'}^{-1}}}{X-\theta^{is}}\Big) \\
  &=& \Big(\prod\limits_{j\in sQ}(X-\theta^{j})\Big)\Big(\prod\limits_{i\in Q}
   \frac{X^{{s'}^{-1}}-(\theta^{is})^{{s'}^{-1}}}{X-\theta^{is}}\Big) \\
   &=& f_{sQ}(X)\cdot u(X)\,,
\end{eqnarray*}
where $u(X)=\prod\limits_{i\in Q}
   \frac{X^{{s'}^{-1}}-(\theta^{is})^{{s'}^{-1}}}{X-\theta^{is}}$.
It is enough to show that $u(X)$ is coprime to $X^n-\lambda^s$.
Further, it is enough to show that, for any $i\in Q$,
$\frac{X^{{s'}^{-1}}-(\theta^{is})^{{s'}^{-1}}}{X-\theta^{is}}$
is coprime to $X^n-\lambda^s$.
Note that $\theta^{js}$ for $j\in 1+r{\Bbb Z}_{n'r}$
are all roots of $X^n-\lambda^s$, cf. Eqn \eqref{s+} and Remark \ref{s+rZ}.
If $j\not\equiv i\! \pmod{n'r}$,
then $js{s'}^{-1}\not\equiv is{s'}^{-1}\! \pmod{n'r}$
because $ss'^{-1}$ is coprime to $n'r$,
hence $(\theta^{js})^{{s'}^{-1}}-(\theta^{is})^{{s'}^{-1}}\ne 0 $.
So, any root $\theta^{js}$ of $X^n-\lambda^s$ for $j\in 1+r{\Bbb Z}_{n'r}$
is not a root of the polynomial
$\frac{X^{{s'}^{-1}}-(\theta^{is})^{{s'}^{-1}}}{X-\theta^{is}}$.
This completes the proof of the lemma.
\end{proof}

\begin{Lemma}
Let $s$ be an integer coprime to $n'r$, and
$\varphi:1+r{\Bbb Z}_{n'r}\to[0,p^{\nu_p(n)}]$ be a $q$-coset function. 
Then there is a unit $u(X)\in R_{n,\lambda^s}$ such that
$$  {\cal M}_s\big(f_\varphi(X)\big)=u(X) f_{s\varphi}(X).$$
\end{Lemma}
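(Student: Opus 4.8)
The plan is to reduce the statement to the single-coset case already settled in Lemma~\ref{M_s f_Q}, exploiting that ${\cal M}_s$ is a ring isomorphism (Theorem~\ref{ring iso}) and hence multiplicative. First I would write $f_\varphi(X)=\prod_{Q} f_Q(X)^{\varphi(Q)}$, the product running over $Q\in(1+r{\Bbb Z}_{n'r})/\mu_q$, and apply ${\cal M}_s$ term by term to obtain
$$
{\cal M}_s\big(f_\varphi(X)\big)=\prod_{Q}{\cal M}_s\big(f_Q(X)\big)^{\varphi(Q)}.
$$

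Next, for each $q$-coset $Q$, Lemma~\ref{M_s f_Q} supplies a unit $u_Q(X)\in R_{n,\lambda^s}$ with ${\cal M}_s(f_Q(X))=u_Q(X)f_{sQ}(X)$. Substituting and separating the unit factors from the rest gives
$$
{\cal M}_s\big(f_\varphi(X)\big)
 =\Big(\prod_{Q}u_Q(X)^{\varphi(Q)}\Big)\Big(\prod_{Q}f_{sQ}(X)^{\varphi(Q)}\Big).
$$
Setting $u(X)=\prod_{Q}u_Q(X)^{\varphi(Q)}$, this is a unit of $R_{n,\lambda^s}$, since a product of units and their powers is again a unit.

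It remains to recognize the second factor as $f_{s\varphi}(X)$, which is the only real bookkeeping in the argument. By Remark~\ref{s+rZ} the map $Q\mapsto sQ$ is a bijection from $(1+r{\Bbb Z}_{n'r})/\mu_q$ onto $(s+r{\Bbb Z}_{n'r})/\mu_q$, so re-indexing the product by $Q'=sQ$ turns $\prod_Q f_{sQ}(X)^{\varphi(Q)}$ into $\prod_{Q'}f_{Q'}(X)^{\varphi(s^{-1}Q')}$. By the definition of $s\varphi$ (Definition~\ref{def q-coset function}(iii)), $\varphi(s^{-1}Q')=(s\varphi)(Q')$, so this product equals $\prod_{Q'}f_{Q'}(X)^{(s\varphi)(Q')}=f_{s\varphi}(X)$. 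Hence ${\cal M}_s(f_\varphi(X))=u(X)f_{s\varphi}(X)$ with $u(X)$ a unit, as required. I expect no genuine obstacle here: all the content sits in the single-coset Lemma~\ref{M_s f_Q}, and the present lemma is just its multiplicative consequence; the one point to state carefully is the re-indexing $Q\mapsto sQ$ together with the identity $(s\varphi)(sQ)=\varphi(Q)$.
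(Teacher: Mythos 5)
Your proof is correct and follows essentially the same route as the paper's: decompose $f_\varphi(X)=\prod_Q f_Q(X)^{\varphi(Q)}$, apply the ring isomorphism ${\cal M}_s$ factor by factor, invoke Lemma~\ref{M_s f_Q} for each coset, and re-index via $Q\mapsto sQ$ together with $\varphi(s^{-1}Q')=(s\varphi)(Q')$. If anything, your bookkeeping is slightly more careful than the paper's, which writes the unit as $\prod_Q u_Q(X)$ rather than $\prod_Q u_Q(X)^{\varphi(Q)}$; both are of course units, so the difference is immaterial.
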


\begin{proof}
Since $f_\varphi(X)
 =\prod_{Q\in(1+r{\Bbb Z}_{n'r})/\mu_q}f_Q(X)^{\varphi(Q)}$,
 see Definition \ref{def q-coset function} (i),
and ${\cal M}_s$ is a ring isomorphism, we have:
$$
{\cal M}_s\big(f_\varphi(X)\big)
 =\prod_{Q\in(1+r{\Bbb Z}_{n'r})/\mu_q}
    {\cal M}_s\big(f_Q(X)\big)^{\varphi(Q)}.
$$
For each $Q\in(1+r{\Bbb Z}_{n'r})/\mu_q$, by Lemma \ref{M_s f_Q}
we have a unit $u_Q(X)$ in $R_{n,\lambda^s}$ such that
 ${\cal M}_s\big(f_Q(X)\big)=u_Q(X)f_{sQ}(X)$.
Then $u(X)=\prod_{Q\in(1+r{\Bbb Z}_{n'r})/\mu_q}u_Q(X)$
is a unit of $R_{n,\lambda^s}$ and
$$
{\cal M}_s\big(f_\varphi(X)\big)
 = u(X)\prod_{Q\in(1+r{\Bbb Z}_{n'r})/\mu_q}f_{sQ}(X)^{\varphi(Q)}.
$$
As mentioned in Remark \ref{s+rZ},  $sQ$ runs over
$(s+r{\Bbb Z}_{n'r})/\mu_q$ when $Q$ runs over
$(1+r{\Bbb Z}_{n'r})/\mu_q$;
conversely, $s^{-1}Q'$ runs over
$(1+r{\Bbb Z}_{n'r})/\mu_q$ when $Q'$ runs over
$(s+r{\Bbb Z}_{n'r})/\mu_q$. Thus, we get
$$
{\cal M}_s\big(f_\varphi(X)\big)
 = u(X)\prod_{Q'\in(s+r{\Bbb Z}_{n'r})/\mu_q}f_{Q'}(X)^{\varphi(s^{-1}Q')}.
$$
But $\varphi(s^{-1}Q')=s\varphi(Q')$, 
see Definition \ref{def q-coset function} (iii). So
$$
{\cal M}_s\big(f_\varphi(X)\big)
 = u(X)\prod_{Q'\in(s+r{\Bbb Z}_{n'r})/\mu_q}f_{Q'}(X)^{(s\varphi)(Q')};
$$
that is, ${\cal M}_s\big(f_\varphi(X)\big)=u(X)f_{s\varphi}(X)$.
\end{proof}

As a consequence, we obtain the following immediately.

\begin{Theorem}\label{M_s phi}
Let $s$ be an integer coprime to $n'r$, and
$\varphi:1+r{\Bbb Z}_{n'r}\to[0,p^{\nu_p(n)}]$ be a $q$-coset function.
Then ${\cal M}_s(C_\varphi)=C_{s\varphi}$ 
which is a $\lambda^s$-constacyclic code.
\end{Theorem}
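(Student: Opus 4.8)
The plan is to deduce the statement directly from the lemma immediately preceding it, which supplies the crucial factorization ${\cal M}_s\big(f_\varphi(X)\big)=u(X)\,f_{s\varphi}(X)$ with $u(X)$ a unit of $R_{n,\lambda^s}$, together with the ideal-theoretic characterization of constacyclic codes in Lemma \ref{generator}. First I would record that ${\cal M}_s$ is a ring isomorphism from $R_{n,\lambda}$ onto $R_{n,\lambda^s}$ by Theorem \ref{ring iso}, so that the image ${\cal M}_s(C_\varphi)$ of the ideal $C_\varphi$ is automatically an ideal of $R_{n,\lambda^s}$, i.e.\ a $\lambda^s$-constacyclic code. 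This already settles the last clause of the statement, and it remains only to identify this image with $C_{s\varphi}$. I would also note at the outset that $s\varphi$ is a genuine $q$-coset function on $s+r{\Bbb Z}_{n'r}$ by Definition \ref{def q-coset function}(iii), so that $C_{s\varphi}$ is a well-defined $\lambda^s$-constacyclic code with check polynomial $f_{s\varphi}(X)$.

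The heart of the argument is a membership translation. By Lemma \ref{generator}(i), for $c(X)\in R_{n,\lambda}$ we have $c(X)\in C_\varphi$ if and only if $c(X)f_\varphi(X)\equiv 0\pmod{X^n-\lambda}$. Since ${\cal M}_s$ is a ring isomorphism, it sends the zero of $R_{n,\lambda}$ to the zero of $R_{n,\lambda^s}$ and is injective, so by multiplicativity this condition is equivalent to ${\cal M}_s(c(X))\,{\cal M}_s(f_\varphi(X))\equiv 0\pmod{X^n-\lambda^s}$. Now I substitute the factorization from the preceding lemma: the left side equals ${\cal M}_s(c(X))\,u(X)\,f_{s\varphi}(X)$, and because $u(X)$ is a unit of $R_{n,\lambda^s}$ it may be absorbed by multiplying through with $u(X)^{-1}$, so the condition is equivalent to ${\cal M}_s(c(X))\,f_{s\varphi}(X)\equiv 0\pmod{X^n-\lambda^s}$. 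Invoking Lemma \ref{generator}(i) this time inside $R_{n,\lambda^s}$ — which is legitimate because $\lambda^s$ is again a nonzero element of ${\Bbb F}_q$ and that lemma holds for every such base — this last congruence says precisely that ${\cal M}_s(c(X))\in C_{s\varphi}$.

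Chaining these equivalences gives that $c(X)\in C_\varphi$ holds exactly when ${\cal M}_s(c(X))\in C_{s\varphi}$, and since ${\cal M}_s$ is a bijection this yields ${\cal M}_s(C_\varphi)=C_{s\varphi}$, completing the argument.

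There is no serious obstacle here, because the entire computational content has been discharged in the preceding lemma; the only points demanding a little care are (a) verifying that $u(X)$ being a unit really does permit cancelling it in the congruence without altering the solution set, and (b) confirming that Lemma \ref{generator} is applied with the correct base ring $R_{n,\lambda^s}$ rather than $R_{n,\lambda}$. An equally short alternative would be to argue through generator polynomials via Lemma \ref{generator}(ii), using that $f_{\bar\varphi}(X)\mid c(X)$ is carried by ${\cal M}_s$ to a divisibility by the image of $f_{\bar\varphi}(X)$; but the check-polynomial version above seems cleanest, which is presumably why the authors advertise the theorem as an immediate consequence.
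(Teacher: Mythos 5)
Your proposal is correct and matches the paper's intended argument: the paper derives Theorem \ref{M_s phi} ``immediately'' from the preceding lemma ${\cal M}_s\big(f_\varphi(X)\big)=u(X)f_{s\varphi}(X)$ together with the fact that ${\cal M}_s$ is a ring isomorphism, and your membership translation via Lemma \ref{generator}(i) (with the unit $u(X)$ absorbed) is exactly the routine verification the authors leave implicit. The two points you flag for care --- cancelling the unit and applying Lemma \ref{generator} in $R_{n,\lambda^s}$, which is justified by Remark \ref{s+rZ} since $\gcd(s,n'r)=1$ keeps ${\rm ord}_{{\Bbb F}_q^*}(\lambda^s)=r$ --- are handled properly.
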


By Theorem \ref{M_s phi} and Lemma \ref{s_1=s_2?}, we get the following at once.

\begin{Corollary}\label{s_1=s_2}
Let $s_1$ and $s_2$ be integers coprime to $n'r$, let
$s_1=p^{\nu_p(s_1)}{s_1'}$ and $s_2=p^{\nu_p(s_2)}{s_2'}$.
Then the following two are equivalent to each other:
\begin{itemize}
\item[\bf(i)]
$s_1\varphi=s_2\varphi$, for any
$q$-coset function $\varphi: 1+r{\Bbb Z}_{n'r}\to[0,p^{\nu_p(n)}]$.
\item[\bf(ii)]
$s_1'\equiv s_2'\!\pmod{nr}$~ and~
 $\nu_p(s_1)\equiv \nu_p(s_2)\!\pmod{e}$.
\end{itemize}
\end{Corollary}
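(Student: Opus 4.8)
The plan is to read both (i) and (ii) as shadows of the ring isometries ${\cal M}_s$ and to pivot everything through the equality ${\cal M}_{s_1}={\cal M}_{s_2}$. Indeed, condition (ii) is \emph{verbatim} condition (ii) of Lemma \ref{s_1=s_2?}, so by that lemma it is equivalent to ${\cal M}_{s_1}={\cal M}_{s_2}$. Hence it suffices to prove the single equivalence (i) $\Longleftrightarrow$ ${\cal M}_{s_1}={\cal M}_{s_2}$, after which the corollary drops out. Throughout I will lean on Lemma \ref{generator}: the assignment $\varphi\mapsto C_\varphi$ is a bijection between $q$-coset functions and ideals of $R_{n,\lambda}$, so in particular $C_\varphi=C_{\varphi'}$ forces $\varphi=\varphi'$, and every ideal is some $C_\varphi$.

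The direction ${\cal M}_{s_1}={\cal M}_{s_2}\Rightarrow$ (i) is immediate and I would dispatch it first. Fix any $q$-coset function $\varphi$. By Theorem \ref{M_s phi}, $C_{s_1\varphi}={\cal M}_{s_1}(C_\varphi)={\cal M}_{s_2}(C_\varphi)=C_{s_2\varphi}$; note that ${\cal M}_{s_1}={\cal M}_{s_2}$ already forces $\lambda^{s_1}=\lambda^{s_2}$, hence $s_1\equiv s_2\pmod r$, so the two codes live in the common ring $R_{n,\lambda^{s_1}}=R_{n,\lambda^{s_2}}$ and the functions $s_1\varphi,s_2\varphi$ share the domain $s_1+r{\Bbb Z}_{n'r}=s_2+r{\Bbb Z}_{n'r}$. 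The uniqueness in Lemma \ref{generator} then gives $s_1\varphi=s_2\varphi$, which is (i). Together with Lemma \ref{s_1=s_2?} this yields (ii)$\Rightarrow$(i).

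The converse (i)$\Rightarrow$(ii) is where I expect the real work, and it is the main obstacle. Applying Theorem \ref{M_s phi} to (i) gives ${\cal M}_{s_1}(C_\varphi)=C_{s_1\varphi}=C_{s_2\varphi}={\cal M}_{s_2}(C_\varphi)$ for \emph{every} $\varphi$; since every ideal of $R_{n,\lambda}$ is some $C_\varphi$, the two ring isometries ${\cal M}_{s_1}$ and ${\cal M}_{s_2}$ agree on every constacyclic code. The delicate point is upgrading this ``same action on all ideals'' to the genuine equality ${\cal M}_{s_1}={\cal M}_{s_2}$ demanded by Lemma \ref{s_1=s_2?}. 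This is not automatic: a ring isometry can fix every ideal setwise without being the identity — the multipliers $X\mapsto X^{q^a}$ are precisely of this kind — so (i) a priori pins the composite ${\cal M}_{s_2}^{-1}{\cal M}_{s_1}$ (which by the composition rule \eqref{s1 s2} is again some ${\cal M}_t$) down only \emph{modulo such multipliers}. Concretely, (i) forces $t$ to carry each $j\in 1+r{\Bbb Z}_{n'r}$ into its own $q$-coset, and taking $j=1$ yields merely $t\equiv q^a\pmod{n'r}$ for some integer $a$. To reach (ii) one must then recover the finer invariants — the residue governing the action on the generator $X$ (i.e. ${s'}^{-1}\bmod nr$) and the Frobenius exponent $\nu_p(s)\bmod e$ governing the action on scalars — from this single congruence modulo $n'r$. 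I would therefore spend the bulk of the argument deciding whether the coset-function data in (i) really determines ${\cal M}_s$ on the nose. If it does not, the faithful conclusion is that (i) is equivalent only to the weaker condition ``$s_1\equiv q^a s_2\pmod{n'r}$ for some $a$,'' so that (ii) should be read up to multiplication by powers of $q$; settling this reconciliation is the crux of the proof.
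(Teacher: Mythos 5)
Your proof of (ii)$\Rightarrow$(i) is correct and coincides with the paper's intended argument: the paper's entire proof of this corollary is the single sentence ``By Theorem \ref{M_s phi} and Lemma \ref{s_1=s_2?}, we get the following at once,'' and in the direction (ii)$\Rightarrow$(i) the chain you wrote---Lemma \ref{s_1=s_2?} gives ${\cal M}_{s_1}={\cal M}_{s_2}$, Theorem \ref{M_s phi} gives $C_{s_1\varphi}={\cal M}_{s_1}(C_\varphi)={\cal M}_{s_2}(C_\varphi)=C_{s_2\varphi}$, and the uniqueness in Lemma \ref{generator} gives $s_1\varphi=s_2\varphi$---is exactly what the authors have in mind.

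The obstacle you isolate in the converse, however, is not a gap in your attempt but a genuine error in the paper: the implication (i)$\Rightarrow$(ii) is false, and your ``weaker condition'' is the correct statement. A concrete counterexample: take $p=q=3$ (so $e=1$), $\lambda=1$ (so $r=1$), $n=n'=4$, hence $nr=n'r=4$, and $s_1=1$, $s_2=7$. Since $7^{-1}\equiv 3=q\pmod{4}$, every $q$-coset function $\varphi$ on $1+r{\Bbb Z}_{n'r}={\Bbb Z}_4$ satisfies $(7\varphi)(k)=\varphi(3k)=\varphi(k)$, so $s_1\varphi=\varphi=s_2\varphi$ and (i) holds; but $s_2'=7\not\equiv 1=s_1'\pmod{4}$, so (ii) fails. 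This is consistent with Lemma \ref{s_1=s_2?}: indeed ${\cal M}_7\ne{\cal M}_1$ because ${\cal M}_7(X)=X^3$, yet ${\cal M}_7(c(X))=c(X^3)=c(X)^3$ maps every cyclic code of $R_{4,1}$ onto itself---precisely the multiplier phenomenon you describe. So (i) pins down ${\cal M}_{s_2}^{-1}{\cal M}_{s_1}$ only up to such Frobenius multipliers, and the faithful equivalence is (i) $\iff$ $s_1\equiv q^a s_2\pmod{n'r}$ for some integer $a$; condition (ii) is sufficient for this but not necessary. The saving grace is that the paper only ever invokes the valid direction (ii)$\Rightarrow$(i)---in the proof of Theorem \ref{theorem s phi dual}, applied to $s_1=(-p^h)(-p^{e-h})=p^e$ and $s_2=1$---so nothing downstream is damaged; but your refusal to make the silent leap from ``${\cal M}_{s_1}$ and ${\cal M}_{s_2}$ agree on every ideal'' to ``${\cal M}_{s_1}={\cal M}_{s_2}$ as maps'' is the mathematically correct judgment, and no amount of extra work would have closed that gap.
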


\section{Galois dual codes of constacyclic codes}

\begin{Definition}\label{def h-inner product}\rm
For $h\in[0,e]$, the Galois inner product
$\langle{\bf a},{\bf b}\rangle_h=\sum\limits_{i=0}^{n-1}a_ib_i^{p^h}$
is defined in Eqn \eqref{galois inner}, which
we call explicitly the {\em $p^h$-inner product}.
For any code $C\subseteq {\Bbb F}_q^n$ we have a code
$C^{\bot h}=\big\{{\bf a}\in{\Bbb F}_q^n\,\big|\,
   \langle{\bf c},{\bf a}\rangle_h=0,~\forall~{\bf c}\in C\big\},$
and call it the {\em Galois dual-code}
(more explicitly, the {\em $p^h$-dual code}) of the code $C$.
\end{Definition}

\begin{Remark}\rm
Obviously, $\langle{\bf a},{\bf b}\rangle_h$ is a non-degenerate form
on ${\Bbb F}_q^n$; it is a linear function for the first variable ${\bf a}$,
while it is a semi-linear function for the second variable~${\bf b}$;
more precisely, it is $\gamma_{p^h}$-linear for the second variable ${\bf b}$,
where $\gamma_{p^h}$ is the Galois automorphism of the
field ${\Bbb F}_q$ defined in Remark \ref{galois}.
In particular, if $C\subseteq {\Bbb F}_q^n$ is a linear code, then
$\dim_{{\Bbb F}_q}(C)+\dim_{{\Bbb F}_q}(C^{\bot h})=n$.

The $p^0$-inner product $\langle{\bf a},{\bf b}\rangle_0$ is the
Euclidean inner product and $C^{\bot0}$ is just the usual dual code of $C$.
The $p^{\frac{e}{2}}$-inner product
$\langle{\bf a},{\bf b}\rangle_{\frac{e}{2}}$ (if $e$ is even) is
the Hermitian inner product and $C^{\bot{\frac{e}{2}}}$
is just the Hermitian dual code of $C$.
\end{Remark}

In this section we characterize the Galois dual codes of constacyclic codes
by $q$-coset functions.

\begin{Lemma}
Let $C\subseteq R_{n,\lambda}$ be a code, let
${\rm Ann}(C)$ be as in Eqn \eqref{Ann}. Then
$$ C^{\bot h}={\cal M}_{-p^{e-h}}({\rm Ann}(C)).$$
In particular, $C^{\bot h}$ is a $\lambda^{-p^{e-h}}$-constacyclic code.
\end{Lemma}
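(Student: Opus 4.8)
The plan is to prove the asserted equality by one inclusion plus a dimension count, the inclusion resting on a single coefficient identity. First I would record the data attached to $s=-p^{e-h}$: because $r\mid(q-1)$ and $\gcd(n',p)=1$ force $\gcd(s,n'r)=1$, while $\nu_p(s)=e-h$ and $s'=s'^{-1}\equiv-1\pmod{nr}$, Theorem \ref{ring iso} applies and ${\cal M}_{-p^{e-h}}$ is a ring isomorphism onto $R_{n,\lambda^{-p^{e-h}}}$. By \eqref{M_s} it sends $a(X)=\sum_{j}a_jX^j$ to $b(X)=\sum_{j}a_j^{p^{e-h}}X^{-j}$ in $R_{n,\lambda^{-p^{e-h}}}$; reducing $X^{-j}=\lambda^{p^{e-h}}X^{n-j}$ for $j\ge1$ gives the explicit coefficients $b_0=a_0^{p^{e-h}}$ and $b_m=\lambda^{p^{e-h}}a_{n-m}^{p^{e-h}}$ for $1\le m\le n-1$.

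The heart of the argument is then a direct evaluation of $\langle{\bf c},{\bf b}\rangle_h$. Applying $\gamma_{p^h}$ to the coefficients above, the two Frobenius powers compose to $p^e$, so $b_0^{p^h}=a_0^{p^e}=a_0$ and $b_m^{p^h}=\lambda^{p^e}a_{n-m}^{p^e}=\lambda a_{n-m}$ for $m\ge1$, since $\lambda,a_j\in{\Bbb F}_q$. Hence
$$
 \langle{\bf c},{\bf b}\rangle_h=\sum_{m=0}^{n-1}c_mb_m^{p^h}
 =a_0c_0+\lambda\sum_{m=1}^{n-1}a_{n-m}c_m,
$$
and a short check shows the right-hand side is exactly the constant term of the product $a(X)c(X)$ computed in $R_{n,\lambda}$ (the indices $i+j\equiv0\pmod n$ contribute only $(i,j)=(0,0)$ and the pairs with $i+j=n$, the latter carrying the wrap-around factor $\lambda$). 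This is where the exponent $-p^{e-h}$ is forced: it is chosen precisely so that $\gamma_{p^h}$ undoes the $p^{e-h}$-power on the $a_j$ and simultaneously turns $\lambda^{p^{e-h}}$ into $\lambda$, reproducing the constacyclic factor.

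With this identity the inclusion is immediate: for $a(X)\in{\rm Ann}(C)$ one has $a(X)c(X)\equiv0\pmod{X^n-\lambda}$ for every $c(X)\in C$, so its constant term vanishes, whence $\langle{\bf c},{\bf b}\rangle_h=0$ for all $c\in C$ and $b={\cal M}_{-p^{e-h}}(a)\in C^{\bot h}$; thus ${\cal M}_{-p^{e-h}}({\rm Ann}(C))\subseteq C^{\bot h}$. To upgrade this to equality I would compare dimensions. Since ${\cal M}_{-p^{e-h}}$ is a semi-linear bijection it preserves ${\Bbb F}_q$-dimension, and Lemma \ref{Ann=bar} together with $\deg f_\varphi+\deg f_{\bar\varphi}=n$ gives $\dim_{{\Bbb F}_q}{\rm Ann}(C)=n-\dim_{{\Bbb F}_q}C$; on the other side the non-degeneracy of $\langle\cdot,\cdot\rangle_h$ gives $\dim_{{\Bbb F}_q}C^{\bot h}=n-\dim_{{\Bbb F}_q}C$. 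Equal dimensions turn the inclusion into the desired equality. The final assertion is then free: ${\rm Ann}(C)$ is an ideal of $R_{n,\lambda}$, so its image under the ring isomorphism ${\cal M}_{-p^{e-h}}$ is an ideal of $R_{n,\lambda^{-p^{e-h}}}$, i.e.\ a $\lambda^{-p^{e-h}}$-constacyclic code.

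The only genuinely delicate point is the index-and-exponent bookkeeping of the second paragraph: tracking the reversal $j\mapsto n-j$ coming from $s'^{-1}\equiv-1$ and verifying the two cancellations $a^{p^e}=a$ and $\lambda^{p^e}=\lambda$. Everything else is formal, and notably nothing in the computation uses $\gcd(n,p)=1$, so the proof is uniform and covers the case $p\mid n$ as well.
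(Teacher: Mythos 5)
Your proof is correct and follows essentially the same route as the paper's: write out ${\cal M}_{-p^{e-h}}$ coefficient-wise, observe that the Galois inner product of a codeword with the image of an annihilator element reproduces a coefficient of the product in $R_{n,\lambda}$, deduce the inclusion ${\cal M}_{-p^{e-h}}({\rm Ann}(C))\subseteq C^{\bot h}$, and upgrade it to equality by the identical dimension count. The only difference is economy: the paper matches all $n$ coefficients of $a(X)b(X)$ against the inner products with the shifts $X^k{\cal M}_{-p^{e-h}}(b(X))$, $k=0,1,\cdots,n-1$, whereas you note that the constant-term identity (the $k=0$ instance) alone suffices once the dimension argument is in place.
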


\begin{proof} Assume
$a(X)=\sum\limits_{i=0}^{n-1}a_iX^i,\ b(X)=\sum\limits_{i=0}^{n-1}b_iX^i$.
In the ring $R_{n,\lambda}$ we have the following computation:
\begin{eqnarray*}
a(X)b(X)&= & \sum\limits_{k=0}^{n-1}\sum\limits_{i+j=k}a_ib_jX^k
   +\sum\limits_{k=n}^{2n-2}\sum\limits_{i+j=k}a_ib_jX^k  \\[3pt]
   &\equiv& \sum\limits_{k=0}^{n-1}\left(\sum\limits_{i+j=k}a_ib_j
    +\sum\limits_{i+j=n+k}\lambda a_ib_j\right)X^k \pmod{X^n-\lambda}.
\end{eqnarray*}
Thus
$$ a(X)b(X)\equiv 0 \pmod{X^n-\lambda}
$$
if and only if
\begin{equation}\label{in R_n,lambda}
\sum\limits_{i=0}^{k}a_ib_{k-i}+
 \lambda\sum\limits_{i=k+1}^{n-1} a_ib_{k+n-i}=0, ~~~~ k=0,1,\cdots,n-1.
\end{equation}

In $R_{n,\lambda^{-p^{e-h}}}
 ={\Bbb F}_q[X]/\langle X^n-\lambda^{-p^{e-h}}\rangle$,
consider the ideal
$\big\langle {\cal M}_{-p^{{e-h}}}(b(X))\big\rangle$
generated by the polynomial ${\cal M}_{-p^{{e-h}}}(b(X))$.
Since
$$\lambda^{p^{{e-h}}}X^n\equiv 1 \pmod{X^n-\lambda^{-p^{{e-h}}}},$$
for $0\leq k \leq n$, $X^k$ is a unit of $R_{n,\lambda^{-p^{{e-h}}}}$,
$X^{k}{\cal M}_{-p^{{e-h}}}(b(X))$ is a generator
of the ideal $\big\langle {\cal M}_{-p^{{e-h}}}(b(X))\big\rangle$ and
\begin{align*}
&X^{k}{\cal M}_{-p^{{e-h}}}(b(X))
 =  X^k\sum\limits_{j=0}^{n-1}b_j^{p^{{e-h}}}\! X^{-j}\\
& \equiv \sum\limits_{j=0}^{k}b_j^{p^{{e-h}}}\! X^{k-j}
          ~+~\lambda^{p^{e-h}}\!
           \sum\limits_{j=k+1}^{n-1}b_j^{p^{{e-h}}}\! X^{n+k-j}
         \pmod{X^n-\lambda^{-p^{{e-h}}}}.
\end{align*}
In the right hand side,
we replace $k-j$ by $i$ in the first $\sum$,
and replacing $n+k-j$ by $i$ in the second $\sum$.
In the ring $R_{n,\lambda^{-p^{e-h}}}$, we get that
\begin{equation}\label{in R_n,lambda^}
  X^{k}{\cal M}_{-p^{{e-h}}}(b(X))
   =\sum\limits_{i=0}^{k}b_{k-i}^{p^{{e-h}}}\! X^{i}
   +\lambda^{p^{e-h}}\!\sum\limits_{i=k+1}^{n-1}b_{n+k-i}^{p^{{e-h}}}\! X^{i},
    ~~~  k=0,1,\cdots  ,n-1.
\end{equation}
Note that $(f^{p^{{e-h}}}) ^{p^{h}}=f$ for any $f\in{\Bbb F}_q$.
From Eqn \eqref{in R_n,lambda} and Eqn \eqref{in R_n,lambda^} we obtain that
$$ a(X)b(X)=0~~\mbox{in $R_{n,\lambda}$}
  ~\iff~\big\langle{\cal M}_{-p^{{e-h}}}(b(X))\big\rangle
   \subseteq a(X)^{\perp h}~~ \mbox{in $R_{n,\lambda^{-p^{e-h}}}$.}  $$
Thus
$${\cal M}_{-p^{{e-h}}}({\rm Ann}(C))\subseteq C^{\perp h}. $$
The inclusion has to be an equality
because $\dim_{{\Bbb F}_q}(C^{\perp})=\dim_{{\Bbb F}_q}({\rm Ann}(C))$.
\end{proof}

Combining the above lemma with
Lemma \ref{Ann=bar} and Theorem \ref{M_s phi},
we get the following theorem and corollary at once.

\begin{Theorem}\label{C_phi bot}
Let $C_\varphi\le R_{n,\lambda}$ be a $\lambda$-constacyclic
code with check polynomial $f_\varphi(X)$, where
$\varphi:(1+r{\Bbb Z}_{n'r})/\mu_q\to[0,p^{\nu_p(n)}]$ is a $q$-coset function.
Then
$$
 C_\varphi^{\bot h}={\cal M}_{-p^{e-h}}(C_{\bar\varphi})
  =C_{-p^{e-h}\bar\varphi}
$$
which is a $\lambda^{-p^{e-h}}$-constacyclic code.
\end{Theorem}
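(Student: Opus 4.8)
The plan is to assemble the statement directly from the three results that immediately precede it, since each disposes of one of the three operations folded into $C_\varphi^{\bot h}$: forming the $p^h$-dual, forming the annihilator, and transporting through a ring isometry. First I would apply the preceding lemma (the one giving $C^{\bot h}={\cal M}_{-p^{e-h}}({\rm Ann}(C))$) to the code $C=C_\varphi$. This yields
$$ C_\varphi^{\bot h}={\cal M}_{-p^{e-h}}\big({\rm Ann}(C_\varphi)\big), $$
and, by the ``in particular'' clause of that lemma, simultaneously certifies that $C_\varphi^{\bot h}$ sits inside $R_{n,\lambda^{-p^{e-h}}}$, i.e.\ that it is a $\lambda^{-p^{e-h}}$-constacyclic code.

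Next I would rewrite the annihilator using Lemma \ref{Ann=bar}, which identifies ${\rm Ann}(C_\varphi)$ with $C_{\bar\varphi}$, the constacyclic code whose check polynomial is $f_{\bar\varphi}(X)$. This produces the middle expression $C_\varphi^{\bot h}={\cal M}_{-p^{e-h}}(C_{\bar\varphi})$. Finally, setting $s=-p^{e-h}$ and invoking Theorem \ref{M_s phi} applied to $\bar\varphi$ in place of $\varphi$, the image ${\cal M}_s(C_{\bar\varphi})$ equals $C_{s\bar\varphi}=C_{-p^{e-h}\bar\varphi}$. This closes the chain and gives both displayed equalities at once; moreover the target-ring assertion in Theorem \ref{M_s phi} re-confirms, as a consistency check, that $C_{-p^{e-h}\bar\varphi}$ is a $\lambda^{-p^{e-h}}$-constacyclic code.

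The only genuine thing to verify — and what I expect to be the sole (and minor) obstacle — is that every object in this chain is legitimately defined, which reduces to checking the standing hypothesis $\gcd(s,n'r)=1$ for $s=-p^{e-h}$. This holds because $q=p^e$ forces $\gcd(p,q-1)=1$, so from $r\,\big|\,(q-1)$ and $p\nmid n'$ we get $\gcd(p^{e-h},n'r)=1$, whence $\gcd(-p^{e-h},n'r)=1$. Writing $s=-p^{e-h}=p^{e-h}\cdot(-1)$ we have $\nu_p(s)=e-h$ and $s'=-1$, so the ring isometry ${\cal M}_{-p^{e-h}}$ of Theorem \ref{ring iso} and the operation $\bar\varphi\mapsto(-p^{e-h})\bar\varphi$ of Definition \ref{def q-coset function}(iii) are both well-defined and Theorem \ref{M_s phi} applies verbatim. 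Beyond this bookkeeping the argument is a purely formal three-step substitution requiring no further computation.
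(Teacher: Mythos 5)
Your proposal is correct and follows exactly the paper's own route: the paper derives Theorem \ref{C_phi bot} by ``combining'' the preceding lemma $C^{\bot h}={\cal M}_{-p^{e-h}}({\rm Ann}(C))$ with Lemma \ref{Ann=bar} and Theorem \ref{M_s phi}, which is precisely your three-step substitution. Your additional verification that $\gcd(-p^{e-h},n'r)=1$ (so that ${\cal M}_{-p^{e-h}}$ and $-p^{e-h}\bar\varphi$ are well-defined) is a correct piece of bookkeeping that the paper leaves implicit.
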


\begin{Corollary}\label{Euc-Herm}
{\bf(i)}
  The dual code $C_\varphi^{\bot 0}\!=\!C_{-\bar\varphi}$,
  which is a $\lambda^{-1}$-constacyclic code.

 {\bf(ii)}
  The Hermitian dual code
  $C_\varphi^{\bot \frac{e}{2}}\!=\!C_{-p^{\frac{e}{2}}\bar\varphi}$,
  which is a $\lambda^{-p^\frac{e}{2}}$-constacyclic code.
\end{Corollary}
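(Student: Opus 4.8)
The plan is to obtain both parts as immediate specializations of Theorem \ref{C_phi bot}, which already asserts that $C_\varphi^{\bot h}=C_{-p^{e-h}\bar\varphi}$ is a $\lambda^{-p^{e-h}}$-constacyclic code for every $h\in[0,e]$. Hence the only work is to substitute $h=0$ and $h=\frac{e}{2}$ and to simplify the resulting exponents on the scalar $\lambda$ and on the $q$-coset function $\bar\varphi$.

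For part {\bf(ii)} I would set $h=\frac{e}{2}$ directly. Then $e-h=\frac{e}{2}$, so Theorem \ref{C_phi bot} reads $C_\varphi^{\bot\frac{e}{2}}=C_{-p^{e/2}\bar\varphi}$, a $\lambda^{-p^{e/2}}$-constacyclic code, which is precisely the asserted form; no further simplification is needed. For part {\bf(i)} I would set $h=0$. Theorem \ref{C_phi bot} then gives $C_\varphi^{\bot0}=C_{-p^{e}\bar\varphi}$, a $\lambda^{-p^{e}}$-constacyclic code, and I must reconcile this with the claimed $C_{-\bar\varphi}$ and $\lambda^{-1}$. For the constacyclic type, I would use $p^e=q$ together with $\lambda^{q}=\lambda$ (since $\lambda\in{\Bbb F}_q^*$ and $x^{q}=x$ for every $x\in{\Bbb F}_q$), whence $\lambda^{-p^{e}}=\lambda^{-q}=\lambda^{-1}$. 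For the $q$-coset function I would invoke Corollary \ref{s_1=s_2} with $s_1=-p^{e}$ and $s_2=-1$, both coprime to $n'r$ because $n'r$ is coprime to $p$: here $s_1'=s_2'=-1$, so trivially $s_1'\equiv s_2'\!\pmod{nr}$, while $\nu_p(s_1)=e\equiv0=\nu_p(s_2)\!\pmod{e}$. The corollary then yields $(-p^{e})\bar\varphi=(-1)\bar\varphi=-\bar\varphi$, so that $C_\varphi^{\bot0}=C_{-p^{e}\bar\varphi}=C_{-\bar\varphi}$, completing {\bf(i)}.

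There is essentially no obstacle here; both parts are routine bookkeeping once Theorem \ref{C_phi bot} is in hand. The only point requiring any care is the observation in part {\bf(i)} that the Frobenius exponent $-p^{e}$ collapses to $-1$ \emph{both} on the scalar $\lambda$ (via $\lambda^{q}=\lambda$) and on the $q$-coset function (via Corollary \ref{s_1=s_2}); once this single identification is recorded, the corollary falls out of Theorem \ref{C_phi bot} at once.
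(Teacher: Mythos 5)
Your proposal is correct and takes essentially the same route as the paper: the paper also obtains this corollary as an immediate specialization of Theorem~\ref{C_phi bot} (stating it follows ``at once''), with the cases $h=0$ and $h=\frac{e}{2}$. Your explicit reconciliation of $-p^{e}$ with $-1$ --- via $\lambda^{q}=\lambda$ for the constacyclic type and Corollary~\ref{s_1=s_2} for the $q$-coset function --- simply records the bookkeeping the paper leaves tacit, and it is carried out correctly.
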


In the semisimple case (i.e. $\nu_p(n)=0$),
the conclusion (i) of the corollary was proved in \cite{Bl13}.
On the other hand, it has been shown in \cite{D, JLS}
that the Hermitian dual code of a $\lambda$-constacyclic code is
a $\lambda^{-p^\frac{e}{2}}$-constacyclic code.

\section{\!\!\!\!Isometrically Galois self-dual constacyclic codes}

\begin{Definition}\label{def h-inner product}\rm
Let $C_\varphi\leq R_{n,\lambda}$ be a $\lambda$-constacyclic code
with check polynomial $f_\varphi(X)$, where
$\varphi:(1+r{\Bbb Z}_{n'r})/\mu_q\to[0,p^{\nu_p(n)}]$ is a $q$-coset function.
\begin{itemize}
\item[\bf(i)]
If  $C_\varphi=C_\varphi^{\bot h}$,
then we say that $C_\varphi$ is a {\em Galois self-dual}
(more explicitly, {\em $p^h$-self-dual}) $\lambda$-constacyclic code.
\item[\bf(ii)]
If there is an integer $s$ with $\gcd(s,n'r)=1$ and $s\equiv 1\!\pmod{r}$
such that
$${\cal M}_{-p^{e-h}s}(C_\varphi)=C_\varphi^{\bot h},$$
then we say that $C_\varphi$ is {\em isometrically Galois self-dual}
(more explicitly, {\em isometrically $p^h$-self-dual}).
\end{itemize}
\end{Definition}

The $p^0$-self-dual constacyclic codes
are just the usual self-dual constacyclic codes,
which were studied by many researchers, e.g., \cite{Bl08, DL, DP}.
The isometrically $p^0$-self-dual constacyclic codes are
the so-called iso-dual constacyclic codes studied in \cite{Bl13}.
And, the $p^{\frac{e}{2}}$-self-dual constacyclic codes (if $e$ is even) are
the usual Hermitian self-dual constacyclic codes considered in \cite{Bl13}.

Recall that a linear code is said to be {\em formal self-dual}
if the code and its dual code have one and the same weight distribution,
cf. \cite[p.307]{HP}.
Any isometrically Galois self-dual constacyclic code
is obviously formal self-dual

In this section we'll exhibit a necessary and sufficient condition for
the existence of isometrically Galois self-dual constacyclic codes.

\begin{Lemma}\label{s phi}
Let $s$ be an integer with $\gcd(s,n'r)=1$ and $s\equiv 1\!\pmod r$.
Let $\varphi:(1+r{\Bbb Z}_{n'r})/\mu_q \rightarrow [0,p^{\nu_p(n)}]$ 
be a $q$-coset function. Then the following four statements 
are equivalent to each other:
\begin{itemize}
\item[\bf(i)]
${\cal M}_{-p^{e-h}s}(C_\varphi)=C_\varphi^{\bot h}$.
\item[\bf(ii)]
$s\varphi=\bar{\varphi}$.
\item[\bf(iii)]
$\varphi(Q)+\varphi(sQ)=p^{\nu_p(n)}$,~
  $\forall$ $Q\in(1+r{\Bbb Z}_{n'r})/\mu_q$.
\item[\bf(iv)]
For any $s$-orbit $Q,sQ,\cdots, s^{\ell-1}Q$ of length $\ell$
on $(1+r{\Bbb Z}_{n'r})/\mu_q$ (see Remark~\ref{q-coset}~(ii)),
one of the following two holds:
 \begin{itemize}
  \item[\bf(iv.a)]
  $\ell$ is even,
  $\varphi(Q)=\varphi(s^2Q)=\cdots=\varphi(s^{\ell-2}Q)$,
  $\varphi(sQ)=\varphi(s^3Q)=\cdots=\varphi(s^{\ell-1}Q)$,
  and $\varphi(Q)+\varphi(sQ)=p^{\nu_p(n)}$.
 \item[\bf(iv.b)]
  $\ell$ is odd,
  $\varphi(Q)=\varphi(sQ)=\varphi(s^2Q)=\cdots=\varphi(s^{\ell-1}Q)
   =\frac{1}{2}p^{\nu_p(n)}$.
  \end{itemize}
\end{itemize}
\end{Lemma}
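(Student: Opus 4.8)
The plan is to establish a cycle of implications among the four statements, reducing the genuinely analytic content to the algebraic equivalence $s\varphi=\bar\varphi$ and then unwinding the combinatorial consequences of that equation on $s$-orbits. I would prove the chain (i)$\iff$(ii), (ii)$\iff$(iii), and (iii)$\iff$(iv), since each neighbouring pair is naturally linked by the machinery already built.

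First I would handle (i)$\iff$(ii). By Theorem~\ref{C_phi bot} we have $C_\varphi^{\bot h}=C_{-p^{e-h}\bar\varphi}$, and by Theorem~\ref{M_s phi} we have ${\cal M}_{-p^{e-h}s}(C_\varphi)=C_{(-p^{e-h}s)\varphi}$. Thus statement (i) reads $C_{(-p^{e-h}s)\varphi}=C_{-p^{e-h}\bar\varphi}$, and by the uniqueness of the $q$-coset function attached to a constacyclic code (Lemma~\ref{generator}, Definition~\ref{check polynomial}) this holds if and only if $(-p^{e-h}s)\varphi=(-p^{e-h})\bar\varphi$ as functions on $-p^{e-h}+r{\Bbb Z}_{n'r}$. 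Here I must be slightly careful that both sides live on the same shifted set and that the scalar $-p^{e-h}$ can be cancelled. Using the definition $(t\varphi)(k)=\varphi(t^{-1}k)$ from Definition~\ref{def q-coset function}(iii) and the identity ${\cal M}_{s_1}{\cal M}_{s_2}={\cal M}_{s_1s_2}$ in Eqn~\eqref{s1 s2}, applying the invertible operator $(-p^{e-h})^{-1}$ to both sides reduces the equality to $s\varphi=\bar\varphi$; Corollary~\ref{s_1=s_2} guarantees that cancelling the common factor is legitimate. This is the step I expect to require the most bookkeeping.

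Next, (ii)$\iff$(iii) is a direct translation: by Definition~\ref{def q-coset function}(iii), $(s\varphi)(k)=\varphi(s^{-1}k)$, and since $s\equiv1\pmod r$ the map $\mu_s$ is a permutation of $1+r{\Bbb Z}_{n'r}$ (and of its $q$-cosets), so $s\varphi=\bar\varphi$ means $\varphi(s^{-1}Q)=p^{\nu_p(n)}-\varphi(Q)$ for every $q$-coset $Q$; re-indexing $Q\mapsto sQ$ gives exactly $\varphi(Q)+\varphi(sQ)=p^{\nu_p(n)}$, which is (iii).

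Finally, for (iii)$\iff$(iv) I would fix an $s$-orbit $Q,sQ,\dots,s^{\ell-1}Q$ on the quotient set (as described in Remark~\ref{q-coset}(ii)) and read off the constraints. Condition (iii) applied around the orbit gives the alternating recursion $\varphi(s^{i+1}Q)=p^{\nu_p(n)}-\varphi(s^iQ)$. Iterating, the value at $s^iQ$ depends only on the parity of $i$: even-indexed entries share one value $\varphi(Q)$ and odd-indexed entries share the complementary value $p^{\nu_p(n)}-\varphi(Q)$. When $\ell$ is even this is consistent and yields case (iv.a); when $\ell$ is odd, the cyclic closure $s^\ell Q=Q$ forces $\varphi(Q)=p^{\nu_p(n)}-\varphi(Q)$, i.e. $\varphi\equiv\frac12 p^{\nu_p(n)}$ on the whole orbit, which is case (iv.b). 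The converse is immediate by checking that each listed pattern satisfies $\varphi(Q)+\varphi(sQ)=p^{\nu_p(n)}$ on every orbit. The only subtlety worth flagging is that case (iv.b) implicitly requires $p^{\nu_p(n)}$ to be even for a solution to exist, which is precisely the kind of existence obstruction the later theorems will exploit.
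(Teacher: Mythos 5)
Your proposal is correct and follows essentially the same route as the paper: (i)$\iff$(ii) via Theorem~\ref{C_phi bot} and Theorem~\ref{M_s phi} plus cancellation of the unit $-p^{e-h}\in{\Bbb Z}_{n'r}^*$, (ii)$\iff$(iii) by unwinding Definition~\ref{def q-coset function}(iii), and (iii)$\iff$(iv) by the alternating recursion around each $s$-orbit with the parity dichotomy on $\ell$. Your closing remark that case (iv.b) forces $p^{\nu_p(n)}$ to be even is exactly the obstruction the paper exploits later (in Lemma~\ref{=bar phi}), so nothing is missing.
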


\begin{proof}
{\bf(i) $\iff$ (ii).}~ By Theorem \ref{C_phi bot},
$C_\varphi^{\bot h}=C_{-p^{e-h}\bar\varphi}$.
On the other hand, by Theorem~\ref{M_s phi},
${\cal M}_{-p^{e-h}s}(C_\varphi)=C_{-p^{e-h}s\varphi}$. So
${\cal M}_{-p^{e-h}s}(C_\varphi)=C_\varphi^{\bot h}$
if and only if
$C_{-p^{e-h}s\varphi}=C_{-p^{e-h}\bar\varphi}$,
if and only if
$-p^{e-h}s\varphi= -p^{e-h}\bar\varphi$.
Note that $-p^{e-h}\in{\Bbb Z}_{n'r}^*$. So
$-p^{e-h}s\varphi= -p^{e-h}\bar\varphi$ if and only if
$s\varphi=\bar\varphi$.

{\bf(ii) $\iff$ (iii).}~ 
Let $s$ be an integer such that $s^{-1}s\equiv 1\!\pmod{n'r}$,
let $Q\in(1+r{\Bbb Z}_{n'r})/\mu_q$.
(ii) can be rewritten as $\varphi=s^{-1}\bar\varphi$. If it holds, then
$\varphi(Q)=s^{-1}\bar\varphi(Q)
 =\bar\varphi(sQ)=p^{\nu_q(n)}-\varphi(sQ)$,
see Definition \ref{def q-coset function};
i.e., (iii) holds. If (iii) holds, then
$\varphi(Q)=p^{\nu_q(n)}-\varphi(sQ)
 =\bar\varphi(sQ)=s^{-1}\bar\varphi(Q)$,
i.e., (ii) holds.

{\bf(iii) $\implies$ (iv).}~
By (iii), $\varphi(Q)+\varphi(sQ)=p^{\nu_p(n)}
 =\varphi(sQ)+\varphi(s^2Q)$.
We get $\varphi(Q)=\varphi(s^2Q)$.
Similarly, $\varphi(sQ)+\varphi(s^2Q) =\varphi(s^2Q)+\varphi(s^3Q)$.
we get $\varphi(sQ)=\varphi(s^3Q)$.
Iterating in this way, we see that
\begin{equation}\label{s^iQ}
 \varphi(s^iQ)=\begin{cases}\varphi(Q), & \mbox{$i$ is even;}\\
 \varphi(sQ), & \mbox{$i$ is odd.}\end{cases}
\end{equation}

If $\ell$ is even, then, noting that $\ell-2$ is even while $\ell-1$ is odd, we get:
$$
\varphi(Q)=\varphi(s^2Q)=\cdots=\varphi(s^{\ell-2}),~~
\varphi(sQ)=\varphi(s^3Q)=\cdots=\varphi(s^{\ell-1}Q).
$$

Otherwise, $\ell$ is odd. Since $s^\ell Q=Q$
(see Remark~\ref{q-coset}~(ii)), by Eqn \eqref{s^iQ} we obtain that
$$
\varphi(sQ)=\varphi(s^{\ell}Q)=\varphi(Q).
$$
Combining it with (iii),
we further get that $\varphi(Q)=\varphi(sQ)=\frac{1}{2}p^{\nu_p(n)}$.
By Eqn~\eqref{s^iQ}, $\varphi(s^iQ)=\frac{1}{2}p^{\nu_p(n)}$ for any
integer $i$.

{\bf(iv) $\implies$ (iii).}~
Obviously, (iv) implies that, for any $s$-orbit $Q,sQ,\cdots, s^{\ell-1}Q$
of length $\ell$ on $(1+r{\Bbb Z}_{n'r})/\mu_q$, we have
$$
\varphi(s^iQ)+\varphi(s^{i+1}Q)=p^{\nu_p(n)},~~~~ i=0,1,\cdots.
$$
Thus, (iii) holds.
\end{proof}

A characterization of  isometrically $p^h$-self-dual constacyclic codes 
by $q$-coset functions is obviously obtained as follows.

\begin{Corollary}
$C_\varphi$ is isometrically $p^h$-self-dual if and only if there is an integer~$s$
with $\gcd(s,n'r)=1$ and $s\equiv 1\!\pmod{r}$ such that
$s\varphi=\bar\varphi$.
\end{Corollary}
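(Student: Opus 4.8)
The plan is to recognize this corollary as a direct unfolding of the definition of isometrically $p^h$-self-duality, into which the equivalence already established in Lemma \ref{s phi} is inserted. First I would unpack Definition \ref{def h-inner product}~(ii): $C_\varphi$ is isometrically $p^h$-self-dual precisely when there exists an integer $s$ with $\gcd(s,n'r)=1$ and $s\equiv 1\pmod r$ for which ${\cal M}_{-p^{e-h}s}(C_\varphi)=C_\varphi^{\bot h}$. This is nothing but statement (i) of Lemma \ref{s phi}, quantified existentially over the admissible values of $s$.

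Next, for each admissible $s$ (that is, $\gcd(s,n'r)=1$ and $s\equiv 1\pmod r$), the hypotheses of Lemma \ref{s phi} are exactly met, so I may invoke the equivalence (i)$\iff$(ii) of that lemma: ${\cal M}_{-p^{e-h}s}(C_\varphi)=C_\varphi^{\bot h}$ holds if and only if $s\varphi=\bar\varphi$. Since this biconditional is valid for every admissible $s$ individually, replacing the condition appearing inside the existential quantifier of the definition yields precisely the statement of the corollary: $C_\varphi$ is isometrically $p^h$-self-dual if and only if there is an admissible $s$ with $s\varphi=\bar\varphi$.

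There is essentially no obstacle here, as the entire content of the argument already resides in Lemma \ref{s phi}, whose (i)$\iff$(ii) part routes through Theorems \ref{C_phi bot} and \ref{M_s phi}. The only point deserving a moment's attention is the commutation of the equivalence with the existential quantifier over $s$: because (i)$\iff$(ii) holds for each fixed admissible $s$, and not merely for one distinguished value, the two existentially quantified statements are interchangeable, and the corollary follows at once. I expect the proof itself to be a single sentence citing Lemma \ref{s phi} and Definition \ref{def h-inner product}~(ii).
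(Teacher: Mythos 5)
Your proposal is correct and matches the paper exactly: the paper treats this corollary as an immediate consequence of Lemma \ref{s phi} (the equivalence (i)$\iff$(ii)) combined with the definition of isometric $p^h$-self-duality, which is precisely your argument. Your added remark about the equivalence commuting with the existential quantifier over $s$ is the only (trivial) point the paper leaves implicit.
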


\begin{Lemma}\label{=bar phi}
Let $s$ be an integer with $\gcd(s,n'r)=1$ and $s\equiv 1\!\pmod r$.
The following two statements are equivalent to each other:
\begin{itemize}
\item[\bf(i)]
There exists a $q$-coset function
$\varphi: (1+r{\Bbb Z}_{n'r})/\mu_q\rightarrow [0,p^{\nu_p(n)}]$
such that $s\varphi=\bar\varphi$.
\item[\bf(ii)]
 One of the following two conditions holds:
\begin{itemize}
\item[\bf(ii.1)]
$p=2$ and $\nu_2(n)\ge 1$.
\item[\bf(ii.2)]
The length of any $s$-orbit on $(1+r{\Bbb Z}_{n'r})/\mu_q$ is even.
\end{itemize}
\end{itemize}
\end{Lemma}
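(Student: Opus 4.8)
The plan is to reduce the whole statement to the combinatorial characterization already established in Lemma~\ref{s phi}, specifically the equivalence of its conditions (i) and (iv). The hypotheses here ($\gcd(s,n'r)=1$ and $s\equiv 1\pmod r$) match those of Lemma~\ref{s phi} exactly, so a $q$-coset function $\varphi$ satisfies $s\varphi=\bar\varphi$ if and only if, on every $s$-orbit $Q,sQ,\cdots,s^{\ell-1}Q$ of length $\ell$ on the quotient set $(1+r{\Bbb Z}_{n'r})/\mu_q$, the values of $\varphi$ follow pattern (iv.a) when $\ell$ is even, or pattern (iv.b) when $\ell$ is odd. With this in hand, the lemma becomes purely a question of when such an assignment of values in the integer interval $[0,p^{\nu_p(n)}]$ can be made, orbit by orbit; since the $s$-orbits partition the quotient set, any collection of per-orbit assignments glues to a genuine $q$-coset function.

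For the direction (ii)~$\Rightarrow$~(i), I would build $\varphi$ one $s$-orbit at a time. On an even-length orbit, pattern (iv.a) is always realizable: pick any value $v\in[0,p^{\nu_p(n)}]$ on the even-indexed cosets and set the odd-indexed ones to $p^{\nu_p(n)}-v$; the even length guarantees that the even- and odd-indexed cosets $s^iQ$ are genuinely distinct, so this prescription is well-defined and $v+(p^{\nu_p(n)}-v)=p^{\nu_p(n)}$ gives exactly (iv.a). This settles case (ii.2) outright, since there every orbit is even. For case (ii.1), the extra input is that $p=2$ and $\nu_2(n)\ge 1$ make $\frac{1}{2}p^{\nu_p(n)}=2^{\nu_2(n)-1}$ an honest integer lying in $[0,p^{\nu_p(n)}]$; so on each odd-length orbit I set $\varphi$ identically equal to that value, realizing (iv.b), while even-length orbits are treated as above.

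For the converse (i)~$\Rightarrow$~(ii), I would assume such a $\varphi$ exists and invoke Lemma~\ref{s phi}(iv) once more. If every $s$-orbit is even, then (ii.2) holds and we are finished. Otherwise some $s$-orbit has odd length, and on that orbit only pattern (iv.b) is available (since (iv.a) presupposes even $\ell$), which forces $\varphi$ to take the value $\frac{1}{2}p^{\nu_p(n)}$ there. Because $\varphi$ is integer-valued, $\frac{1}{2}p^{\nu_p(n)}$ must itself be an integer, which happens precisely when $p^{\nu_p(n)}$ is even, that is, when $p=2$ and $\nu_2(n)\ge 1$; this is exactly (ii.1).

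The one genuinely delicate point — the main obstacle — is the integrality bookkeeping around the fixed value $\frac{1}{2}p^{\nu_p(n)}$. One must observe that whenever $p$ is odd, or $p=2$ with $\nu_2(n)=0$, the quantity $p^{\nu_p(n)}$ is odd, so $\frac{1}{2}p^{\nu_p(n)}$ is not an integer and pattern (iv.b) is simply unavailable. This is precisely what turns odd-length orbits into an obstruction in exactly those cases, and it is the observation that links the two clauses of (ii) together. The remaining steps — realizing (iv.a) on even orbits and verifying well-definedness on the quotient set — are routine.
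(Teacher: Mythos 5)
Your proof is correct and follows essentially the same route as the paper: both directions hinge on Lemma~\ref{s phi} (the orbit patterns (iv.a)/(iv.b)), with (ii)$\Rightarrow$(i) realized by an orbit-by-orbit construction and (i)$\Rightarrow$(ii) by the integrality obstruction for $\frac{1}{2}p^{\nu_p(n)}$ on an odd-length orbit. The only differences are cosmetic: the paper uses the constant function $\varphi\equiv\frac{1}{2}\cdot 2^{\nu_2(n)}$ in case (ii.1) and a single global value $d$ in case (ii.2), whereas you allow per-orbit choices, which changes nothing essential.
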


\begin{proof}
(i)$\implies$(ii). Assume that (i) holds and (ii.2) is not satisfied,
i.e., $s\varphi=\bar\varphi$ but there is at least one
$s$-orbit $Q, sQ,\cdots,s^{\ell-1}Q$ on $(1+r{\Bbb Z}_{n'r})/\mu_q$
whose length $\ell$ is odd. Then, by Lemma \ref{s phi},
$\varphi(Q)=\frac{1}{2}p^{\nu_p(n)}$. So it has to be the case that
the prime $p=2$ and $\nu_2(n)\ge 1$; i.e., (ii.1) is satisfied.

(ii)$\implies$(i).
First assume that the condition (ii.1) holds. We take a $q$-coset function
$\varphi: (1+r{\Bbb Z}_{n'r})/\mu_q \to [0,2^{\nu_2(n)}]$ as follows:
\begin{equation}\label{1/2 p}
 \varphi(Q)=\frac{1}{2}\cdot 2^{\nu_2(n)},~~~~ \forall~
  Q\in(1+r{\Bbb Z}_{n'r})/\mu_q.
\end{equation}
Then Lemma \ref{s phi} (iii) is satisfied obviously,
so $s\varphi=\bar\varphi$.

Next assume that the condition (ii.2) holds. We take an integer $d$
such that $0\le d<p^{\nu_p(n)}$,
and define a $q$-coset function
$\varphi:(1+r{\Bbb Z}_{nr})/\mu_q\to[0,p^{\nu_p(n)}]$ as follows:
for each $s$-orbit $Q,sQ,\cdots,s^{\ell-1}Q$ 
of length $\ell$ on $(1+r{\Bbb Z}_{nr})/\mu_q$,
since $\ell$ is even, we can set
\begin{equation}\label{even length}
 \varphi(s^iQ)=
 \begin{cases}d, &\mbox{$i$ is even};\\
 p^{\nu_p(n)}-d, &\mbox{$i$ is odd}.\end{cases}
\end{equation}
Then the condition (iv.a) of Lemma \ref{s phi} holds for all
$s$-orbits on $(1+r{\Bbb Z}_{nr})/\mu_q$. Thus, by Lemma \ref{s phi},
$s\varphi=\bar\varphi$.
\end{proof}

We state some facts for the semisimple case which come 
from \cite{CDFL}. Note that a duadic $\lambda'$-constacyclic 
code over ${\Bbb F}_q$ of length $n'$ is corresponding 
to a partition $(1+r{\Bbb Z}_{n'r})/\mu_q={\cal X}\cup{\cal X}'$ 
and an $s\in{\Bbb Z}^*_{n'r}\cap(1+r{\Bbb Z}_{n'r})$ such that
$s{\cal X}={\cal X}'$.

\begin{Lemma}\label{duadic} 
The following three statements are equivalent to each other:
\begin{itemize}
\item[\bf(i)]
There is an integer $s$ with $\gcd(s,n'r)=1$ and $s\equiv 1\!\pmod r$
such that the length of any $s$-orbit on $(1+r{\Bbb Z}_{n'r})/\mu_q$ is even.
\item[\bf(ii)]
The duadic $\lambda'$-constacyclic codes over ${\Bbb F}_q$ of length $n'$ exist.
\item[\bf(iii)]
$q$ is odd and one of the following two conditions holds:
\begin{itemize}
\item[\bf(iii.1)]
 $\nu_2(n')\ge 1$ and $\nu_2(q-1)>\nu_2(r)\ge 1$;
\item[\bf(iii.2)]
$\nu_2(r)=1$ and $\min\{\nu_2(q+1),\nu_2(n')\}\ge 2$.
\end{itemize}
\end{itemize}
\end{Lemma}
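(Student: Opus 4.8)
The statement is an equivalence of three conditions about the existence of duadic $\lambda'$-constacyclic codes and an arithmetic criterion. The plan is to prove the cycle (i)$\iff$(ii) first (this is essentially definitional/combinatorial), and then grind out the arithmetic equivalence (i)$\iff$(iii), which I expect to be the substantive part. Throughout I will work with the permutation $\mu_s$ acting on the finite set $(1+r{\Bbb Z}_{n'r})/\mu_q$ and translate "all $s$-orbits have even length" into a $2$-adic divisibility condition on the order of $\mu_s$.

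**Steps (i)$\iff$(ii).** The equivalence (i)$\iff$(ii) should be almost immediate from the description of duadic codes given just before the lemma: a duadic $\lambda'$-constacyclic code corresponds to a splitting $(1+r{\Bbb Z}_{n'r})/\mu_q={\cal X}\cup{\cal X}'$ together with an $s\in{\Bbb Z}^*_{n'r}\cap(1+r{\Bbb Z}_{n'r})$ with $s{\cal X}={\cal X}'$. First I would observe that such a splitting into two halves interchanged by $\mu_s$ exists exactly when $\mu_s$ has no fixed point and, more precisely, when one can $2$-color the $\mu_s$-orbits alternately with no conflict; since $\mu_s$ interchanges ${\cal X}$ and ${\cal X}'$, on each orbit $Q,sQ,\ldots,s^{\ell-1}Q$ the elements must alternate between ${\cal X}$ and ${\cal X}'$, which is possible (and forces $s^\ell Q=Q$ to return to the starting color) precisely when $\ell$ is even. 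So the existence of a valid $\cal X$-versus-$\cal X'$ splitting is equivalent to all $s$-orbits having even length, giving (i)$\iff$(ii).

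**Steps (i)$\iff$(iii): the arithmetic heart.** This is where the real work lies. All $s$-orbits on the quotient set have even length iff the order of $\mu_s$ (as a permutation of $(1+r{\Bbb Z}_{n'r})/\mu_q$) is even \emph{and}, more strongly, every orbit length is even; the cleanest route is to first reduce to finding a single $s\equiv1\pmod r$, $\gcd(s,n'r)=1$, of even multiplicative order on the quotient set with all orbit lengths even. I would choose $s$ to be (a lift of) an element of order a power of $2$ — specifically try $s=-1$ or an appropriate $2$-power-order unit in $1+r{\Bbb Z}_{n'r}$ — and analyze when multiplication by such $s$ moves every $q$-coset $Q$ to a \emph{distinct} coset $sQ\ne Q$ and fixed-point-freely with even orbits. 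The condition $s^iQ=Q$ translates to $s^i\equiv q^j\pmod{n'r}$ for some $j$, i.e. $s^i$ lies in the cyclic subgroup generated by $q$ inside ${\Bbb Z}^*_{n'r}$; so the orbit length of $Q$ under $\mu_s$ is the order of $s$ in the quotient group $\langle s\rangle q\langle q\rangle$-type coset structure. The main obstacle is bookkeeping the $2$-adic valuations: the existence of a suitable even-order $s$ hinges on comparing $\nu_2(q-1)$, $\nu_2(q+1)$, $\nu_2(r)$ and $\nu_2(n')$, since $r\mid(q-1)$ and the constraint $s\equiv1\pmod r$ confines $s$ to the subgroup $1+r{\Bbb Z}_{n'r}$ whose $2$-part is governed by $\nu_2(n'r)-\nu_2(r)$.

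**Where the two cases of (iii) come from, and the expected obstacle.** The split into (iii.1) and (iii.2) should emerge from whether the required $2$-power-order unit $s$ can be taken congruent to $1\pmod r$ while having even orbits: case (iii.1) ($\nu_2(q-1)>\nu_2(r)\ge1$) is the regime where there is room inside $1+r{\Bbb Z}_{n'r}$ to find an element of order $2$ acting freely once $\nu_2(n')\ge1$ supplies an even-length factor; case (iii.2) ($\nu_2(r)=1$, $\min\{\nu_2(q+1),\nu_2(n')\}\ge2$) is the regime where one instead exploits $q\equiv-1$-type behavior so that $q$ itself already pairs cosets, and the extra factor of $2$ in both $\nu_2(q+1)$ and $\nu_2(n')$ guarantees even orbits. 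I would verify each direction by an explicit $2$-adic computation of the order of $\mu_s$ on the quotient, and for the converse construct an explicit $s$ under the stated valuation hypotheses. I expect the genuine difficulty to be the forward direction "(i)$\implies$(iii)": showing that even-length orbits are \emph{impossible} outside the two valuation regimes requires ruling out all candidate $s$ simultaneously, which amounts to a careful Sylow-$2$ analysis of the finite abelian group ${\Bbb Z}^*_{n'r}$ and its subgroup $1+r{\Bbb Z}_{n'r}$ relative to the cyclic subgroup $\langle q\rangle$. This is presumably where the cited structural results from \cite{CDFL} are invoked to close the argument.
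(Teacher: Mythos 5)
Your (i)$\iff$(ii) step is fine, and is in fact more self-contained than the paper's treatment: the paper disposes of it by citing \cite[Lemma 6]{CDFL}, whereas your alternating two-coloring of each $\mu_s$-orbit is a genuine direct proof. The gap is the other half of the lemma. The paper proves the equivalence with (iii) by invoking \cite[Corollary 14]{CDFL}, which states precisely that duadic $\lambda'$-constacyclic codes of length $n'$ over ${\Bbb F}_q$ exist if and only if (iii) holds. Your proposal never establishes this in either direction: what you give is a plan (``verify each direction by an explicit $2$-adic computation'', ``construct an explicit $s$''), and you yourself defer the hard forward implication to ``presumably\ldots cited structural results from \cite{CDFL}'' without identifying which result or how it applies. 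Since the derivation of the two valuation regimes (iii.1) and (iii.2) is exactly the substantive content of the lemma, the proposal as written does not prove it.

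Moreover, the one concrete reduction your sketch does commit to is false. You claim $s^iQ=Q$ translates to $s^i\equiv q^j\pmod{n'r}$ for some $j$, so that orbit lengths are governed by the order of $s$ in ${\Bbb Z}_{n'r}^*$ modulo the subgroup $\langle q\rangle$. But elements $k\in 1+r{\Bbb Z}_{n'r}$ are coprime to $r$, not necessarily to $n'r$, and $s^iQ=Q$ only means $s^ik\equiv q^jk\pmod{n'r}$ for some $j$, which is strictly weaker when $\gcd(k,n')>1$. The paper's Example \ref{ex q=25} exhibits the failure: there $q=25$, $r=2$, $n'=26$, $n'r=52$, and $Q_{13}=\{13\}$ is a singleton $q$-coset (since $25\cdot 13\equiv 13\pmod{52}$, even though $25\not\equiv 1\pmod{52}$); likewise $\mu_5$ fixes $Q_{13}$ (since $5\cdot 13\equiv 13\pmod{52}$) although $5\notin\langle 25\rangle=\{1,25\}\subseteq{\Bbb Z}_{52}^*$. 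So orbit lengths cannot be computed as orders in the quotient group ${\Bbb Z}_{n'r}^*/\langle q\rangle$; any correct $2$-adic analysis must handle the non-unit elements of $1+r{\Bbb Z}_{n'r}$ separately, which is precisely what makes the cited results of \cite{CDFL} nontrivial. As it stands, your Sylow-$2$ roadmap would have to be repaired before it could even be carried out.
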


\begin{proof}
From \cite[Lemma 6]{CDFL}
we can get the equivalence of (i) and (ii).
By \cite[Corollary 14]{CDFL}, (ii) is equivalent to (iii).
\end{proof}

\begin{Theorem}\label{iso h-self dual}
 The isometrically $p^h$-self-dual $\lambda$-constacyclic codes
over ${\Bbb F}_{q}$ of length $n$ exist if and only if
one of the following three conditions holds:
\begin{itemize}
\item[\bf(i)]
$p=2$ and $\nu_2(n)\ge 1$.
\item[\bf(ii)]
$p$ is odd, $\nu_2(n')\ge 1$ and $\nu_2(q-1)>\nu_2(r)\ge 1$.
\item[\bf(iii)]
$p$ is odd, $\nu_2(r)=1$ and $\min\{\nu_2(q+1),\nu_2(n')\}\ge 2$.
\end{itemize}
\end{Theorem}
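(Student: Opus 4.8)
The plan is to prove Theorem~\ref{iso h-self dual} by reducing it to the existence of a $q$-coset function $\varphi$ with $s\varphi=\bar\varphi$ for a suitable $s$, and then translating that existence into arithmetic conditions on $p$, $q$, $r$, $n$. The crucial observation is that the conditions (i)--(iii) in the theorem do not depend on $h$ at all. This is the conceptual heart of the matter: although the isometrically $p^h$-self-dual condition ${\cal M}_{-p^{e-h}s}(C_\varphi)=C_\varphi^{\bot h}$ involves $h$, Lemma~\ref{s phi} shows it is equivalent to the purely combinatorial condition $s\varphi=\bar\varphi$, which is independent of $h$. So the very first step is to invoke Lemma~\ref{s phi}, more precisely its Corollary, to assert that an isometrically $p^h$-self-dual $\lambda$-constacyclic code exists if and only if there exist an integer $s$ with $\gcd(s,n'r)=1$, $s\equiv 1\pmod r$, and a $q$-coset function $\varphi$ with $s\varphi=\bar\varphi$.

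Next I would feed this into Lemma~\ref{=bar phi}, which characterizes (for a fixed $s$) exactly when such a $\varphi$ exists: either $p=2$ with $\nu_2(n)\ge 1$, or every $s$-orbit on $(1+r{\Bbb Z}_{n'r})/\mu_q$ has even length. Quantifying over the choice of $s$, the existence question therefore splits cleanly into two cases. In the case $p=2$, condition (ii.1) of Lemma~\ref{=bar phi} is available for \emph{any} admissible $s$ (indeed the constant function $\varphi\equiv\tfrac12\,2^{\nu_2(n)}$ works whenever $\nu_2(n)\ge1$), so the existence holds precisely when $\nu_2(n)\ge1$, which is exactly condition (i) of the theorem. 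In the case $p$ odd, condition (ii.1) is unavailable, so existence is equivalent to the existence of some admissible $s$ for which all $s$-orbits have even length --- and this is precisely statement (i) of Lemma~\ref{duadic}.

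The final step is then to apply Lemma~\ref{duadic}, whose equivalence (i)$\iff$(iii) gives: for $q$ odd, an $s$ with all $s$-orbits even-length exists if and only if either $\nu_2(n')\ge1$ and $\nu_2(q-1)>\nu_2(r)\ge1$, or $\nu_2(r)=1$ and $\min\{\nu_2(q+1),\nu_2(n')\}\ge2$. Since $p$ odd forces $q=p^e$ odd, these two subcases reproduce exactly conditions (ii) and (iii) of the theorem. Assembling the two cases ($p=2$ giving (i), and $p$ odd giving (ii) or (iii)) completes the argument.

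Because all the substantive work has already been isolated into the three preceding lemmas, the proof itself is essentially a bookkeeping chain of equivalences, and I do not expect any single step to present a genuine obstacle. If there is a subtle point to watch, it is the $h$-independence: one must be sure that in reducing (i) to (ii) in Lemma~\ref{s phi} the factor $-p^{e-h}$ is genuinely a unit of ${\Bbb Z}_{n'r}^*$ so that it cancels on both sides, leaving the $h$-free condition $s\varphi=\bar\varphi$. Granting that (which Lemma~\ref{s phi} already secures), the theorem follows by simply citing Lemma~\ref{s phi}, Lemma~\ref{=bar phi}, and Lemma~\ref{duadic} in turn and collating the resulting cases.
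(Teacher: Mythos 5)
Your proposal is correct and follows essentially the same route as the paper's own proof: Lemma~\ref{s phi} to reduce isometric $p^h$-self-duality to the $h$-free condition $s\varphi=\bar\varphi$, Lemma~\ref{=bar phi} to characterize when such a $\varphi$ exists, and Lemma~\ref{duadic} to convert the even-orbit condition into the arithmetic conditions (ii)--(iii). The only point to tighten is your claim that for $p=2$ existence holds \emph{precisely} when $\nu_2(n)\ge 1$: to exclude existence via even-length $s$-orbits when $p=2$ and $\nu_2(n)=0$, you should note that Lemma~\ref{duadic}(iii) forces $q$ to be odd, so that route is vacuous in characteristic $2$ (the paper's necessity argument handles this automatically by passing through Lemma~\ref{duadic}).
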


\begin{proof}
First we prove the necessity. Assume that
$C_\varphi\le R_{n,\lambda}$ is an isometrically $p^h$-self-dual
$\lambda$-constacyclic code. By Lemma \ref{s phi},
there is an integer~$s$ with $\gcd(s,n'r)=1$ and $s\equiv 1\!\pmod{r}$
such that $s\varphi=\bar\varphi$. By Lemma \ref{=bar phi},
either (i) of the theorem holds, or
the length of any $s$-orbit on $(1+r{\Bbb Z}_{n'r})/\mu_q$ is even,
hence, by Lemma \ref{duadic} (iii), one of the (ii) and (iii) of the theorem holds.

\smallskip
Next we prove the sufficiency.
If one of the conditions (ii) and (iii) holds,
then, by Lemma~\ref{duadic},
there is an integer $s$ with $\gcd(s,n'r)=1$ and $s\equiv 1\!\pmod r$
such that the length of any $s$-orbit on $(1+r{\Bbb Z}_{n'r})/\mu_q$ is even.
Thus, the sufficiency is deduced from 
Lemma \ref{=bar phi} and Lemma \ref{s phi} at once.
\end{proof}

\begin{Corollary}\label{one for all}
The following three statements are equivalent:
\begin{itemize}
\item[\bf(i)]
There is an $h\in[0,e]$ such that
the isometrically $p^h$-self-dual $\lambda$-constacyclic codes
of length $n$ over ${\Bbb F}_q$ exist.
\item[\bf(ii)]
For any $h\in[0,e]$, the isometrically $p^h$-self-dual $\lambda$-constacyclic
codes of length $n$ over ${\Bbb F}_q$ exist.
\item[\bf(iii)]
either $p=2$ and $\nu_2(n)\ge 1$, or
the duadic $\lambda'$-constacyclic codes over ${\Bbb F}_q$ of length $n'$ exist.
\end{itemize}
\end{Corollary}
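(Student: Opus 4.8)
The plan is to read the whole corollary off Theorem~\ref{iso h-self dual}, exploiting one crucial feature of that theorem: its existence criterion is phrased entirely in terms of $p$, $n$, $n'$, $q$ and $r$, and therefore carries no dependence whatsoever on the Galois parameter $h$. Write $(\mathrm{T})$ for the disjunction ``condition (i), (ii) or (iii) of Theorem~\ref{iso h-self dual} holds''. By that theorem, for each fixed $h\in[0,e]$ the isometrically $p^h$-self-dual $\lambda$-constacyclic codes of length $n$ exist if and only if $(\mathrm{T})$ holds. Since $(\mathrm{T})$ is $h$-free and $[0,e]\neq\varnothing$, I would immediately conclude that statement (i) (``there is an $h$ with existence'') and statement (ii) (``for every $h$, existence'') are each equivalent to $(\mathrm{T})$: if some $h_0$ witnesses existence then $(\mathrm{T})$ holds, and conversely if $(\mathrm{T})$ holds then existence is forced for every $h$ at once. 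Thus (i)~$\Leftrightarrow$~$(\mathrm{T})$~$\Leftrightarrow$~(ii).

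It then remains to identify $(\mathrm{T})$ with statement (iii), and here I would simply match the three conditions of Theorem~\ref{iso h-self dual} against their counterparts in Lemma~\ref{duadic}. Condition (i) of Theorem~\ref{iso h-self dual} is verbatim the first alternative of statement (iii), namely $p=2$ and $\nu_2(n)\ge1$. For the other two, the first step is to note that, because $q=p^e$, the hypothesis ``$p$ is odd'' is the same as ``$q$ is odd''. Under this hypothesis, condition (ii) of Theorem~\ref{iso h-self dual} reads $\nu_2(n')\ge1$ and $\nu_2(q-1)>\nu_2(r)\ge1$, which is exactly clause (iii.1) of Lemma~\ref{duadic}, while condition (iii) of Theorem~\ref{iso h-self dual} reads $\nu_2(r)=1$ and $\min\{\nu_2(q+1),\nu_2(n')\}\ge2$, which is exactly clause (iii.2) of Lemma~\ref{duadic}. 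Hence the disjunction of conditions (ii) and (iii) of Theorem~\ref{iso h-self dual} coincides with statement (iii) of Lemma~\ref{duadic}, which by that lemma is equivalent to the existence of duadic $\lambda'$-constacyclic codes over ${\Bbb F}_q$ of length $n'$ --- the second alternative of statement (iii). Assembling the two halves shows $(\mathrm{T})$ is precisely the disjunction of the two alternatives in (iii), and the chain (i)~$\Leftrightarrow$~$(\mathrm{T})$~$\Leftrightarrow$~(iii) together with (ii)~$\Leftrightarrow$~$(\mathrm{T})$ closes the argument.

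There is essentially no analytic difficulty to overcome: the corollary is a repackaging of Theorem~\ref{iso h-self dual} through Lemma~\ref{duadic}, and the real content is the single observation that the theorem's criterion does not see $h$. The only point I expect to require care --- the ``obstacle'', such as it is --- is the purely bookkeeping matching of each condition of Theorem~\ref{iso h-self dual} to its partner in Lemma~\ref{duadic}, in particular the equivalence ``$p$ odd $\Leftrightarrow$ $q$ odd'' that lets conditions (ii) and (iii) of the theorem be recognized as clauses (iii.1) and (iii.2). Once that correspondence is written out cleanly, the proof is immediate.
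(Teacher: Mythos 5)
Your proposal is correct and follows exactly the paper's own argument: the paper likewise deduces (i)~$\Leftrightarrow$~(ii) from the $h$-independence of the criterion in Theorem~\ref{iso h-self dual}, and identifies that criterion with statement (iii) via Lemma~\ref{duadic}. The only difference is that you spell out the condition-by-condition matching (including the observation that ``$p$ odd'' equals ``$q$ odd''), which the paper leaves implicit.
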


\begin{proof}
The necessary and sufficient condition for the existence
of isometrically $p^h$-self-dual constacyclic codes 
stated in Theorem \ref{iso h-self dual}
is independent of the choice of $h\in[0,e]$;
hence the equivalence of (i) and (ii) is obtained.
And, by Lemma~\ref{duadic}, the statement (iii) is equivalent 
to the existence condition stated in Theorem~\ref{iso h-self dual}.
\end{proof}

\section{Galois self-dual constacyclic codes}

In this section we show a necessary and sufficient condition for the existence of
Galois self-dual constacyclic codes. 
We begin with a characterization of the Galois self-dual 
constacyclic codes by $q$-coset functions.

\begin{Theorem}\label{theorem s phi dual}
Let $h\in[0,e]$, and 
$\varphi:(1+r{\Bbb Z}_{n'r})/\mu_q\rightarrow [0,p^{\nu_p(n)}]$ 
be a $q$-coset function.
Then the following two statements are equivalent to each other.
\begin{itemize}
\item[\bf(i)]
$C_\varphi=C_\varphi^{\bot h}$~ (i.e., $C_\varphi$ is $p^h$-self-dual).
\item[\bf(ii)]
$r|\gcd(p^{h}+1,p^e-1)$ (i.e., $-p^{h}$ and $p^e\equiv 1\!\pmod{r}$)
and $-p^{h}\varphi=\bar\varphi$.
\end{itemize}
\end{Theorem}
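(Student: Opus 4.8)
The plan is to reduce the whole statement to an identity of $q$-coset functions, using Theorem~\ref{C_phi bot} as the only structural input. That theorem already gives $C_\varphi^{\bot h}=C_{-p^{e-h}\bar\varphi}$, a $\lambda^{-p^{e-h}}$-constacyclic code, so $p^h$-self-duality is a statement comparing $C_\varphi$ with a code described by the $q$-coset function $-p^{e-h}\bar\varphi$. The one arithmetic fact I will lean on throughout is that $(-p^h)(-p^{e-h})=p^e\equiv q\pmod{n'r}$, combined with $q\psi=\psi$ for every $q$-coset function $\psi$ (immediate from the defining invariance $\psi(q^ik)=\psi(k)$ in Definition~\ref{def q-coset function}(i)); thus acting by $-p^h$ and acting by $-p^{e-h}$ are mutually inverse operations on $q$-coset functions, which is what lets me pass between the exponent $p^{e-h}$ appearing in Theorem~\ref{C_phi bot} and the exponent $p^h$ appearing in (ii).

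First I would dispose of the divisibility clause. Since $r\mid(q-1)=p^e-1$ always holds, the condition $r\mid\gcd(p^h+1,p^e-1)$ is just $r\mid(p^h+1)$, i.e.\ $-p^h\equiv1\pmod r$; multiplying by $p^{e-h}$ and using $p^e\equiv1\pmod r$ shows this is equivalent to $-p^{e-h}\equiv1\pmod r$, i.e.\ to $\lambda^{-p^{e-h}}=\lambda$. So the divisibility clause is exactly the compatibility condition that $C_\varphi$ and $C_\varphi^{\bot h}$ be constacyclic codes for one and the same constant. For the direction (i)$\Rightarrow$(ii) this compatibility must be \emph{derived}: from $C_\varphi=C_\varphi^{\bot h}$ and $\dim_{{\Bbb F}_q}C_\varphi+\dim_{{\Bbb F}_q}C_\varphi^{\bot h}=n$ one gets $0<\dim_{{\Bbb F}_q}C_\varphi=n/2<n$, so $C_\varphi$ is a proper nonzero subspace that is simultaneously $\lambda$-constacyclic and, by Theorem~\ref{C_phi bot}, $\lambda^{-p^{e-h}}$-constacyclic. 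The difference of the two constacyclic shift operators sends any word to a scalar multiple of $(1,0,\dots,0)$, so a proper nonzero subspace invariant under both shifts cannot exist unless the two constants agree; hence $\lambda=\lambda^{-p^{e-h}}$ and $r\mid(p^h+1)$. This structural observation is the step I expect to be the main obstacle, since it is the only point that cannot be settled by quoting an earlier result and pushing symbols around.

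With the divisibility established (or given, in the direction (ii)$\Rightarrow$(i)), the rest is a short computation. Because $-p^{e-h}\equiv1\pmod r$, the function $-p^{e-h}\bar\varphi$ is a $q$-coset function on $1+r{\Bbb Z}_{n'r}$ and $C_{-p^{e-h}\bar\varphi}$ is $\lambda$-constacyclic; the uniqueness of the check $q$-coset function in Lemma~\ref{generator} then gives $C_\varphi=C_\varphi^{\bot h}\iff\varphi=-p^{e-h}\bar\varphi$. Acting on $\varphi=-p^{e-h}\bar\varphi$ by $-p^h$ yields $-p^h\varphi=(-p^h)(-p^{e-h})\bar\varphi=q\bar\varphi=\bar\varphi$, which is the $q$-coset equation in (ii); conversely, acting on $-p^h\varphi=\bar\varphi$ by $-p^{e-h}$ returns $\varphi=-p^{e-h}\bar\varphi$, whence $C_\varphi=C_{-p^{e-h}\bar\varphi}=C_\varphi^{\bot h}$ by Theorem~\ref{C_phi bot}. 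This closes both implications.
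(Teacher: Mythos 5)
Your proof is correct and follows essentially the same route as the paper's: both apply Theorem~\ref{C_phi bot} to write $C_\varphi^{\bot h}=C_{-p^{e-h}\bar\varphi}$, reduce $p^h$-self-duality to the two conditions $\lambda^{-p^{e-h}}=\lambda$ and $\varphi=-p^{e-h}\bar\varphi$, and then convert the latter into $-p^h\varphi=\bar\varphi$ using $(-p^h)(-p^{e-h})=p^e\equiv 1\pmod r$ together with the triviality of the $p^e$-action on $q$-coset functions (the paper cites Corollary~\ref{s_1=s_2} for this, where you verify $q\psi=\psi$ directly from Definition~\ref{def q-coset function}(i) --- an immaterial difference). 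The only point where you go beyond the paper is in explicitly justifying why $C_\varphi=C_\varphi^{\bot h}$ forces $\lambda^{-p^{e-h}}=\lambda$, via the observation that a subspace of dimension $n/2$ (hence nonzero and proper) cannot be invariant under two constacyclic shifts with distinct constants; the paper simply asserts this compatibility, so your argument fills in, rather than deviates from, its proof.
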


\begin{proof}
By Theorem~\ref{C_phi bot}, 
$C_{\varphi}^{\bot h}=C_{-p^{e-h}\bar\varphi}$
which is a $\lambda^{-p^{e-h}}$-constacyclic code. We get that
 $C_\varphi=C_\varphi^{\bot h}$ if and only if
$\lambda^{-p^{e-h}}=\lambda$ and $C_\varphi=C_{-p^{e-h}\bar\varphi}$.
That is,  $C_\varphi=C_\varphi^{\bot h}$ if and only if
$-p^{e-h}\equiv 1\!\pmod{r}$ and $\varphi=-p^{e-h}\bar\varphi$.
Since $r|(q-1)$ where $q=p^e$, we have $p^e\equiv 1\!\pmod r$. Hence 
$$(-p^h)(-p^{e-h})=p^{e}\equiv 1\!\pmod r.$$
We see that $-p^{e-h}\equiv 1\!\!\pmod r$ 
if and only if  $-p^h\equiv 1\!\!\pmod r$.
Multiplying $-p^h$ to the both sides of the equality
$\varphi=-p^{e-h}\bar\varphi$, we get
$-p^h\varphi=(-p^h)(-p^{e-h})\bar\varphi$.
However, by Corollary \ref{s_1=s_2},
$(-p^h)(-p^{e-h})\bar\varphi=\bar\varphi$.
In conclusion, $C_{\varphi}=C_{\varphi}^{\bot h}$ if and only if
$-p^h\equiv 1\!\pmod r$ and $-p^h\varphi=\bar\varphi$.
\end{proof}

We need a number-theoretic result.

\begin{Lemma}\label{k^d}
Let $k$ be an odd integer, and $d$ be a positive integer.
\begin{itemize}
\item[\bf(i)]
If $k=1+2^vu$ with $v\ge 2$ and $2\nmid u$
(equivalently, $k\equiv 1\!\pmod 4$), then
$$\nu_2(k^d-1)=v+\nu_2(d), ~~~~~ \nu_2(k^d+1)=1.$$
\item[\bf(ii)]
If $k=-1+2^vu$ with $v\ge 2$ and $2\nmid u$
(equivalently, $k\equiv-1\!\pmod 4$), then
 \begin{itemize}
 \item[\bf(ii.1)]
  if $\nu_2(d)=0$ (i.e., $d$ is odd) then
   $$\nu_2(k^d-1)=1,~~~~~ \nu_2(k^d+1)=v;$$
 \item[\bf(ii.2)]
  if $\nu_2(d)\ge 1$ (i.e., $d$ is even) then
   $$\nu_2(k^d-1)=v+\nu_2(d), ~~~~~\nu_2(k^d+1)=1.$$
 \end{itemize}
\end{itemize}
\end{Lemma}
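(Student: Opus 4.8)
The plan is to reduce everything to two elementary factorizations of $k^d\mp1$ together with the hypotheses on $\nu_2(k-1)$ and $\nu_2(k+1)$; this is essentially the $p=2$ case of the lifting-the-exponent principle, which I would prove from scratch rather than quote. First I would record the base valuations. In case (i), $k=1+2^vu$ with $v\ge2$ gives $\nu_2(k-1)=v$ and $\nu_2(k+1)=\nu_2(2+2^vu)=1$, the latter because $v\ge2$ makes $1+2^{v-1}u$ odd. In case (ii), $k=-1+2^vu$ gives symmetrically $\nu_2(k+1)=v$ and $\nu_2(k-1)=1$. In both cases $\nu_2(k-1)+\nu_2(k+1)=v+1$, a fact I will use repeatedly.

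Second, I would dispose of the odd-$d$ cases. The factorization $k^d-1=(k-1)\sum_{i=0}^{d-1}k^i$ holds for every $d$, and for odd $d$ one also has $k^d+1=(k+1)\sum_{i=0}^{d-1}(-1)^ik^i$. When $d$ is odd each cofactor is a sum of an odd number of odd terms, hence odd; therefore $\nu_2(k^d-1)=\nu_2(k-1)$ and $\nu_2(k^d+1)=\nu_2(k+1)$. In case (i) with $d$ odd this yields $\nu_2(k^d-1)=v=v+\nu_2(d)$, while $k\equiv1\pmod4$ forces $k^d\equiv1\pmod4$ and hence $\nu_2(k^d+1)=1$. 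In case (ii.1) it yields $\nu_2(k^d-1)=1$ and $\nu_2(k^d+1)=v$, exactly the two asserted values.

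Third, for the even-$d$ cases I would write $d=2^tm$ with $t=\nu_2(d)\ge1$ and $m$ odd, and telescope:
$$k^d-1=(k^m-1)\prod_{j=0}^{t-1}\bigl(k^{2^jm}+1\bigr).$$
By the odd-$d$ step applied to $m$, $\nu_2(k^m-1)=\nu_2(k-1)$ and $\nu_2(k^m+1)=\nu_2(k+1)$; for each $j\ge1$ the exponent $2^jm$ is even, so $k^{2^jm}\equiv1\pmod4$ and that factor contributes valuation $1$. Adding the valuations gives $\nu_2(k^d-1)=\nu_2(k-1)+\nu_2(k+1)+(t-1)=(v+1)+(t-1)=v+\nu_2(d)$, which settles the even-$d$ part of both case (i) and case (ii.2). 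Finally, since $d$ is even, $k^d$ is an odd square and thus $\equiv1\pmod4$, so $\nu_2(k^d+1)=1$, completing case (ii.2) and the remaining subcase of (i).

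I expect no genuine obstacle here: the argument is purely a parity bookkeeping. The one point requiring care is matching the target $v+\nu_2(d)$ against $\nu_2(k-1)+\nu_2(k+1)+(t-1)$, and the reason it always works is precisely that $\nu_2(k-1)+\nu_2(k+1)=v+1$ in both cases, so the $-1$ produced by the telescoping product consistently absorbs the extra $+1$.
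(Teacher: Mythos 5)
Your proof is correct and complete, but it takes a genuinely different route from the paper's. The paper argues by induction over the prime factorization of $d$: using Newton's binomial formula it first checks the effect of a single prime exponent on $k=\pm1+2^vu$ (an odd prime preserves the shape $\pm1+2^vu'$ with $u'$ odd, while squaring produces $1+2^{v+1}u'$), and then composes these prime by prime to conclude $k^d=1+2^{v+\nu_2(d)}u'$, respectively $k^d=-1+2^vu'$ when $k\equiv-1\pmod 4$ and $d$ is odd, from which both valuations are read off. You instead rely on two polynomial factorizations --- the geometric-sum identities $k^d-1=(k-1)\sum_{i=0}^{d-1}k^i$ and, for odd $d$, $k^d+1=(k+1)\sum_{i=0}^{d-1}(-1)^ik^i$, plus the telescoping difference of squares $k^{2^tm}-1=(k^m-1)\prod_{j=0}^{t-1}\bigl(k^{2^jm}+1\bigr)$ --- combined with the invariant $\nu_2(k-1)+\nu_2(k+1)=v+1$; this is the standard self-contained proof of the $p=2$ lifting-the-exponent lemma. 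Each approach has its merits: yours handles the entire odd part of $d$ in one stroke rather than one prime at a time, and the even-$d$ computation is literally the same in cases (i) and (ii.2), which explains structurally why those two conclusions coincide, whereas the paper must redo its binomial bookkeeping separately for $k\equiv-1\pmod 4$; on the other hand, the paper's version exhibits the exact $2$-adic shape of $k^d$ (not merely the valuations of $k^d\mp1$), which is slightly more explicit. All steps in your argument check out, including the two points that require care: the cofactors in the odd-exponent factorizations are sums of an odd number of odd terms, hence odd, and every factor $k^{2^jm}+1$ with $j\ge1$ has valuation exactly $1$ because an odd square is $\equiv1\pmod 4$.
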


\begin{proof}
(i).~
If $d$ is a prime integer;
by the Newton's binomial formula it is easy to
check that
$$
 k^d=\begin{cases}1+2^{v+1}u', & d=2;\\ 1+2^vu', & d\ne 2;\end{cases}
  ~~~~~ \mbox{with}~ 2\nmid u'.
$$
For any positive integer $d$,
decomposing $d$ into a product of primes, we can get
$$ k^d=1+2^{v+\nu_2(d)}u',~~~~~ \mbox{with}~ 2\nmid u'.$$
Then it is obvious that
$\nu_2(k^d-1)=v+\nu_2(d)$ and $\nu_2(k^d+1)=1$.

(ii).~
Similarly to the above, if $d$ is a prime integer then
$$k^d=\begin{cases}1+2^{v+1}u', & d=2;\\
   -1+2^vu', & d\ne 2;\end{cases}~~~~~ \mbox{with}~ 2\nmid u'.$$
For any positive integer $d$, similarly to the above argument again,
$$k^d=\begin{cases}1+2^{v+\nu_2(d)}u', & \nu_2(d)\ge 1;\\
   -1+2^vu', & \nu_2(d)=0;\end{cases}~~~~~ \mbox{with}~ 2\nmid u'.$$
Then both (ii.1) and (ii.2) are easily derived.
\end{proof}

We return to our notations on constacyclic codes.

\begin{Lemma}\label{duadic given by}
Assume that $-p^{h}\equiv 1\!\pmod r$.
The length of any $(-p^{h})$-orbit on
$(1+r{\Bbb Z}_{n'r})/\mu_q$ is even
if and only if both $n'$ and $r$ are even (hence $p$ is odd)
and one of the following three conditions holds:
\begin{itemize}
\item[\bf(i)]
$p=1+2^vu$ with $v\ge 2$ and $2\nmid u$
(equivalently, $p\equiv 1\!\pmod 4$).
\item[\bf(ii)]
$p=-1+2^vu$ with $v\ge 2$ and $2\nmid u$
(equivalently, $p\equiv -1\!\pmod 4$),
both $e$ and $h$ are even.
\item[\bf(iii)]
$p=-1+2^vu$ with $v\ge 2$ and $2\nmid u$
(equivalently, $p\equiv -1\!\pmod 4$),
at least one of $e$ and $h$ is odd, and $\nu_2(n'r)>v$.
\end{itemize}
\end{Lemma}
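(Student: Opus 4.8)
The plan is to reduce the statement about $(-p^h)$-orbit lengths to a concrete number-theoretic condition, and then apply Lemma~\ref{duadic} together with the $2$-adic valuation computations in Lemma~\ref{k^d}. The key observation is that $-p^h$ plays exactly the role of the integer $s$ in Lemma~\ref{duadic}(i): indeed the hypothesis $-p^h\equiv 1\!\pmod r$ is precisely the condition $s\equiv 1\!\pmod r$ needed there. So ``the length of any $(-p^h)$-orbit on $(1+r{\Bbb Z}_{n'r})/\mu_q$ is even'' is equivalent, by Lemma~\ref{duadic}, to the arithmetic condition (iii) of that lemma: namely $q$ odd together with one of (iii.1) or (iii.2). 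The whole proof thus amounts to \emph{translating} that condition on $(q,n',r)$ into the condition on $(p,e,h,n',r)$ stated here, using $q=p^e$ and the given relation $-p^h\equiv 1\!\pmod r$.

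\textbf{First I would} establish the easy preliminary reductions. Since Lemma~\ref{duadic}(iii) already requires $q$ odd, and $q=p^e$, we get $p$ odd immediately. Next I would show that both $n'$ and $r$ must be even: the condition $\nu_2(n')\ge 1$ appears in both (iii.1) and (iii.2) of Lemma~\ref{duadic}, giving $2\mid n'$; and in (iii.1) we have $\nu_2(r)\ge 1$ while in (iii.2) we have $\nu_2(r)=1$, so in either case $2\mid r$. This justifies the ``both $n'$ and $r$ are even'' clause in the statement. The real content is then to match the two sub-cases of Lemma~\ref{duadic}(iii) against the three cases (i)--(iii) listed here, and this is where Lemma~\ref{k^d} enters.

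\textbf{The main step} is the following dictionary. Because $-p^h\equiv 1\!\pmod r$ and $r$ is even, we have $p^h\equiv -1\!\pmod r$; I would use Lemma~\ref{k^d} with $k=p$ and $d=h$ (and also $d=e$, since $q-1=p^e-1$ enters) to compute $\nu_2(p^h-1)$, $\nu_2(p^h+1)$, $\nu_2(q-1)$, and $\nu_2(q+1)$ in terms of $v=\nu_2(p\mp1)$ and the parities of $h$ and $e$. The condition $\nu_2(q-1)>\nu_2(r)\ge 1$ in (iii.1) and the condition $\min\{\nu_2(q+1),\nu_2(n')\}\ge 2$ with $\nu_2(r)=1$ in (iii.2) then become explicit inequalities among these $2$-adic valuations. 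When $p\equiv 1\!\pmod 4$ (case (i) here), Lemma~\ref{k^d}(i) gives $\nu_2(p^d-1)=v+\nu_2(d)$ regardless of $d$, which makes the condition hold uniformly; this explains why case (i) carries no parity restriction on $e$ or $h$. When $p\equiv -1\!\pmod 4$, Lemma~\ref{k^d}(ii) splits according to whether the exponent is even or odd, and the interplay of the parities of $e$ and $h$ produces exactly the dichotomy between cases (ii) and (iii) here, with the inequality $\nu_2(n'r)>v$ in (iii) emerging from the $\nu_2(q+1)=v$ computation when an odd exponent forces us into the (iii.2) branch.

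\textbf{The hard part will be} the careful bookkeeping of cases in the $p\equiv -1\!\pmod 4$ situation: one must track simultaneously the parities of $h$ and $e$, decide which of (iii.1)/(iii.2) of Lemma~\ref{duadic} is applicable, and verify the resulting valuation inequalities are equivalent to the stated $\nu_2(n'r)>v$ condition. In particular I expect the subtle point to be reconciling the constraint $-p^h\equiv 1\!\pmod r$ (which forces specific congruences and hence constrains $\nu_2(r)$ relative to $v$ and the parity of $h$) with the separate hypotheses $\nu_2(r)\ge 1$ versus $\nu_2(r)=1$ distinguishing (iii.1) from (iii.2). Both directions (necessity and sufficiency) should follow once this dictionary is set up, since Lemma~\ref{duadic} is an ``if and only if'' and Lemma~\ref{k^d} gives exact valuations rather than mere inequalities.
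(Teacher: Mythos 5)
There is a genuine gap, and it is located exactly where you place the whole weight of the argument: the claim that ``the length of any $(-p^h)$-orbit on $(1+r{\Bbb Z}_{n'r})/\mu_q$ is even'' is equivalent, \emph{by Lemma \ref{duadic}}, to condition (iii) of that lemma. Lemma \ref{duadic} is an existential statement: its condition (i) asserts that \emph{there exists some} integer $s$ with $\gcd(s,n'r)=1$ and $s\equiv 1\!\pmod r$ whose orbits are all of even length. The statement you must prove concerns the \emph{specific} multiplier $s=-p^h$. One implication is fine (if all $(-p^h)$-orbits are even, take $s=-p^h$ to get Lemma \ref{duadic}(i), hence (iii), and in particular $n'$ and $r$ even); but the converse fails: Lemma \ref{duadic}(iii) may hold because some other multiplier works, while $\mu_{-p^h}$ itself has odd orbits. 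A concrete counterexample: $p=7$, $e=2$ (so $q=49$), $h=1$, $r=2$, $n'=2$, so $n'r=4$ and the hypothesis $-p^h=-7\equiv 1\!\pmod 2$ holds. Here $(1+2{\Bbb Z}_4)/\mu_{49}=\{\{1\},\{3\}\}$ since $49\equiv 1\!\pmod 4$, and Lemma \ref{duadic}(iii.1) holds ($\nu_2(n')=1$, $\nu_2(48)=4>\nu_2(r)=1$), as witnessed by $s=3$ whose single orbit $\{\{1\},\{3\}\}$ has length $2$. Yet $-7\equiv 1\!\pmod 4$, so $\mu_{-7}$ fixes both cosets and all its orbits have odd length $1$ --- consistent with the lemma being proved, since here $p\equiv-1\!\pmod 4$, $h$ is odd, $v=\nu_2(7+1)=3$, and $\nu_2(n'r)=2\not>v$. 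So the two conditions are genuinely inequivalent even under the standing hypothesis $-p^h\equiv 1\!\pmod r$; no amount of bookkeeping with Lemma \ref{k^d} can bridge this, because Lemma \ref{duadic}(iii) does not involve $h$ in any way that reproduces the parity dichotomy in cases (ii)/(iii).

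The paper avoids this by invoking a finer, multiplier-specific criterion: by \cite[Lemma 6]{CDFL} the evenness of all $(-p^h)$-orbits is equivalent to the existence of duadic constacyclic codes \emph{given by the multiplier} $\mu_{-p^h}$, and then \cite[Corollary 19]{CDFL} (not the existential \cite[Corollary 14]{CDFL} underlying Lemma \ref{duadic}) translates this into four explicit conditions involving $\nu_2(p^e-1)$, $\nu_2(p^e+1)$, $\nu_2(-p^h-1)$, $\nu_2(-p^h+1)$ and $\nu_2(n'r)$. Only after that reduction does the case analysis via Lemma \ref{k^d} (which your proposal correctly anticipates in spirit) take over. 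To repair your proof you would need to replace the appeal to Lemma \ref{duadic} by an appeal to such a multiplier-specific result, or prove one directly.
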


\begin{proof} Note that $q=p^e$.
By \cite[Lemma 6]{CDFL}, the length of any $(-p^{h})$-orbits
on $(1+r{\Bbb Z}_{n'r})/\mu_q$ is even if and only if
the duadic $\lambda$-constacyclic codes over ${\Bbb F}_{p^e}$ of length $n'$
given by the multiplier $\mu_{-p^{h}}$ exist.
Further, by \cite[Corollary 19]{CDFL}, the latter statement holds if and only if
both $n'$ and $r$ are even (hence $q=p^e$ is odd)
and one of the following four conditions holds
(we adopt a convention that $\nu_2(0)=-\infty$ hence $|\nu_2(0)|=\infty$):
\begin{itemize}
\item[(c1)]
$\nu_2(p^e-1)>\nu_2(-p^{h}-1)$ and $\nu_2(n'r)>\nu_2(-p^{h}-1)$;
\item[(c2)]
$\nu_2(p^e-1)=1$, $\nu_2(-p^{h}-1)>1$,
 $\nu_{2}(p^e+1)+1>\nu_2(-p^{h}-1)$
 and $\nu_{2}(n'r)>\nu_2(-p^{h}-1)$;
\item[(c3)]
$\nu_2(p^e-1)=\nu_2(-p^{h}-1)=1$,~
 $|\nu_2(-p^{h}+1)|>\nu_{2}(p^e+1)$ and $\nu_2(n'r)>\nu_{2}(p^e+1)$;
\item[(c4)]
$\nu_2(p^e-1)=\nu_2(-p^{h}-1)=1$,~
 $|\nu_2(-p^{h}+1)|<\nu_{2}(p^e+1)$
 and $|\nu_2(-p^{h}+1)|<\nu_{2}(n'r)$.
\end{itemize}
It remains to show that one of the four conditions holds
if and only if one of (i), (ii) and (iii) of the lemma holds.
We discuss it in two cases.

{\it Case 1}:  $p=1+2^vu$ with $v\ge 2$ and $2\nmid u$.
At this case the condition (i) of the lemma holds.
On the other hand, by Lemma \ref{k^d},
$$
  \nu_2(p^e-1)=v+\nu_2(e)~~~\mbox{and}~~~
   \nu_2(-p^{h}-1)=\nu_2(p^{h}+1)=1.
$$
Note that $\nu_2(n'r)\ge 2$ (since both $n'$ and $r$ are even),
the condition (c1) holds.

{\it Case 2}:  $p=-1+2^vu$ with $v\ge 2$ and $2\nmid u$.
There are four subcases.

{\it Subcase 2.1}: both $e$ and $h$ are even. By Lemma \ref{k^d},
$$
 \nu_2(p^e-1)=v+\nu_2(e), ~~~\mbox{and}~~~
 \nu_2(-p^{h}-1)=\nu_2(p^{h}+1)=1.
$$
The condition (c1) holds, and the condition (ii) of the lemma holds.

{\it Subcase 2.2}: $e$ is even, $h$ is odd.
As we have seen, $\nu_2(p^e-1)=v+\nu_2(e)$. None of (c2), (c3) and (c4) holds.
Further, since $h$ is odd, by Lemma \ref{k^d},
$\nu_2(-p^{h}-1)=\nu_2(p^{h}+1)=v$.
So, (c1) holds if and only if $\nu_2(n'r)>\nu_2(-p^{h}-1)=v$;
and if it is, (iii) of the lemma also holds.

{\it Subcase 2.3}: $e$ is odd, $h$ is even.
Then $\nu_2(p^e-1)=1$ so that (c1) cannot hold.
And, since $\nu_2(-p^{h}-1)=\nu_2(p^{h}+1)=1$,
(c2) does not hold. Further,
$$
\nu_2(-p^{h}+1)=\nu_2(p^{h}-1)=v+\nu_2(h)  > v = \nu_2(p^e+1),
$$
which implies that (c4) is not satisfied.
The condition (c3) is satisfied provided  $\nu_2(n'r)>\nu_2(p^e+1)=v$,
which is also required by (iii) of the lemma.

{\it Subcase 2.4}: $e$ is odd, and $h$ is odd. Then
$$
 \nu_2(p^e-1)=1, ~~ \nu_2(p^e+1)=v, ~~
  \nu_2(p^{h}+1)=v, ~~ \nu_2(p^{h}-1)=1.
$$
None of the conditions (c1), (c3) and (c4) holds.
Note that one more requirement ``$\nu_2(n'r)>\nu_2(p^{h}+1)=v$''
makes (c2) held, and it also makes (iii) of the lemma held.
\end{proof}

\begin{Theorem}\label{h-self dual}
The $p^h$-self-dual $\lambda$-constacyclic codes over ${\Bbb F}_{q}$
of length $n$ exist if and only if $r\,\big|\,\gcd(p^{h}+1,\,p^e-1)$
and one of the following holds:
\begin{itemize}
\item[\bf(i)]
$p=2$ and $\nu_2(n)\ge 1$.
\item[\bf(ii)]
$p=1+2^vu$ with $v\ge 2$ and $2\nmid u$
(equivalently, $p\equiv 1\!\pmod 4$),
both $n'$ and $r$ are even.
\item[\bf(iii)]
$p=-1+2^vu$ with $v\ge 2$ and $2\nmid u$
(equivalently, $p\equiv -1\!\pmod 4$),
all of $n'$, $r$, $e$ and $h$ are even.
\item[\bf(iv)]
$p=-1+2^vu$ with $v\ge 2$ and $2\nmid u$
(equivalently, $p\equiv -1\!\pmod 4$),
both $n'$ and $r$ are even, but
at least one of $e$ and $h$ is odd, and $\nu_2(n'r)>v$.
\end{itemize}
\end{Theorem}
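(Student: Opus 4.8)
The plan is to reduce Theorem~\ref{h-self dual} to the combinatorial condition already established in Theorem~\ref{theorem s phi dual} and the number-theoretic analysis of orbit lengths in Lemma~\ref{duadic given by}. By Theorem~\ref{theorem s phi dual}, a $p^h$-self-dual $\lambda$-constacyclic code $C_\varphi$ exists if and only if $r\mid\gcd(p^h+1,p^e-1)$ (equivalently $-p^h\equiv 1\!\pmod r$) \emph{and} there exists a $q$-coset function $\varphi$ with $-p^h\varphi=\bar\varphi$. Thus, assuming throughout the divisibility condition $r\mid\gcd(p^h+1,p^e-1)$, the whole problem becomes: when does a $q$-coset function satisfying $s\varphi=\bar\varphi$ exist for the specific multiplier $s=-p^h$ (which does satisfy $\gcd(s,n'r)=1$ and $s\equiv 1\!\pmod r$)?

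First I would invoke Lemma~\ref{=bar phi} with $s=-p^h$: such a $\varphi$ exists if and only if either (ii.1) $p=2$ and $\nu_2(n)\ge 1$, or (ii.2) every $(-p^h)$-orbit on $(1+r{\Bbb Z}_{n'r})/\mu_q$ has even length. The first alternative is precisely case (i) of the theorem, so that case is immediate. For the second alternative, I would feed the even-orbit-length condition directly into Lemma~\ref{duadic given by}, which translates ``every $(-p^h)$-orbit is even'' into the arithmetic requirement that both $n'$ and $r$ be even together with one of its three listed sub-conditions (i), (ii), (iii). Matching these three sub-conditions of Lemma~\ref{duadic given by} against cases (ii), (iii), (iv) of the present theorem completes the correspondence, since case (ii) here ($p\equiv 1\!\pmod 4$) is exactly sub-condition (i) there, case (iii) here is sub-condition (ii) there, and case (iv) here is sub-condition (iii) there.

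Concretely, the proof is just a chain of equivalences. For necessity: if a $p^h$-self-dual code exists, Theorem~\ref{theorem s phi dual} gives both the divisibility and a function $\varphi$ with $-p^h\varphi=\bar\varphi$; Lemma~\ref{=bar phi} forces either (i) or the even-orbit condition; and Lemma~\ref{duadic given by} converts the latter into cases (ii)--(iv). For sufficiency: given the divisibility $r\mid\gcd(p^h+1,p^e-1)$ and any one of (i)--(iv), I would run the same lemmas backwards---Lemma~\ref{duadic given by} yields even orbit lengths (in cases (ii)--(iv)), Lemma~\ref{=bar phi} then produces a $\varphi$ with $-p^h\varphi=\bar\varphi$ (in case (i) the even-length hypothesis is replaced by the alternative (ii.1)), and Theorem~\ref{theorem s phi dual} packages this back into $C_\varphi=C_\varphi^{\bot h}$.

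I do not expect a genuine obstacle here, since all three feeder results (Theorem~\ref{theorem s phi dual}, Lemma~\ref{=bar phi}, Lemma~\ref{duadic given by}) are already in hand; the only care required is bookkeeping. The most delicate point will be verifying that the divisibility condition $r\mid\gcd(p^h+1,p^e-1)$ from Theorem~\ref{theorem s phi dual} is exactly what lets me apply Lemma~\ref{duadic given by}, whose standing hypothesis is $-p^h\equiv 1\!\pmod r$---these are the same condition, so the hypotheses align cleanly. A secondary check is confirming that the three arithmetic sub-conditions of Lemma~\ref{duadic given by} line up verbatim with (ii)--(iv) of the theorem, including the parity requirements on $e$ and $h$ and the valuation inequality $\nu_2(n'r)>v$ in the last case; this is routine matching rather than new work.
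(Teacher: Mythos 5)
Your proposal is correct and takes essentially the same route as the paper's own proof: both reduce the problem via Theorem~\ref{theorem s phi dual} to the existence of a $q$-coset function $\varphi$ with $-p^h\varphi=\bar\varphi$, then apply Lemma~\ref{=bar phi} with $s=-p^h$ (whose alternative (ii.1) gives case (i)) and Lemma~\ref{duadic given by} to convert the even-orbit-length alternative into cases (ii)--(iv). The bookkeeping you flag (that $r\mid\gcd(p^h+1,p^e-1)$ is the same as the hypothesis $-p^h\equiv 1\!\pmod r$ of Lemma~\ref{duadic given by}) is handled identically, and implicitly, in the paper.
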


\begin{proof}
By Theorem \ref{theorem s phi dual},
the $p^h$-self-dual $\lambda$-constacyclic codes
over ${\Bbb F}_{q}$ of length $n$ exist if and only if $r\,|\,(p^{h}+1)$
and there is a $q$-coset function
$\varphi:(1+r{\Bbb Z}_{n'r})/\mu_q\to[0,p^{\nu_p(n)}]$
such that $-p^h\varphi=\bar\varphi$.
By Lemma \ref{=bar phi}, $-p^h\varphi=\bar\varphi$
if and only if either (i) of the theorem holds
or the length of any $(-p^h)$-orbit on $(1+r{\Bbb Z}_{n'r})/\mu_q$ is even.
Further, the length of any $(-p^{h})$-orbit
on $(1+r{\Bbb Z}_{n'r})/\mu_q$ is even if and only if
one of the conditions (i), (ii) and (iii) in Lemma \ref{duadic given by}
holds, they are restated in the theorem relabeled by (ii), (iii) and (iv).
\end{proof}

\begin{Corollary}\label{self-dual}
Self-dual $\lambda$-constacyclic codes
over ${\Bbb F}_{q}$ of length $n$ exist if and only if
one of the following holds:
\begin{itemize}
\item[\bf(i)]
$p=2$, $\lambda=1$  and $\nu_2(n)\ge 1$.
\item[\bf(ii)]
$p^e\equiv 1\!\pmod 4$, $\lambda=-1$, $n'$ is even.
\item[\bf(iii)]
$p^e\equiv -1\!\pmod 4$, $\lambda=-1$,
and $\nu_2(n')+1>\nu_2(p^e+1)$.
\end{itemize}
\end{Corollary}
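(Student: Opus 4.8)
The plan is to derive this corollary directly as the special case $h=0$ of Theorem \ref{h-self dual}, since self-dual means precisely $p^0$-self-dual (Euclidean self-dual). Setting $h=0$, the divisibility hypothesis $r\,|\,\gcd(p^h+1,\,p^e-1)$ becomes $r\,|\,\gcd(2,\,p^e-1)$, so I first analyze what this forces on $r$ and hence on $\lambda$. If $p=2$, then $p^e-1$ is odd, so $\gcd(2,p^e-1)=1$, forcing $r=1$, i.e.\ $\mathrm{ord}_{{\Bbb F}_q^*}(\lambda)=1$, which means $\lambda=1$; combined with branch (i) of Theorem \ref{h-self dual} ($p=2$, $\nu_2(n)\ge 1$) this yields exactly branch (i) of the corollary. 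If $p$ is odd, then $p^e-1$ is even, so $\gcd(2,p^e-1)=2$ and the condition $r\,|\,2$ allows $r\in\{1,2\}$; but $r\,|\,(p^h+1)=2$ together with the self-dual constraint will pin down $\lambda=-1$, as I explain next.

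Next I would handle the odd-$p$ case by combining the divisibility constraint with branches (ii)--(iv) of Theorem \ref{h-self dual}, all of which (for $p$ odd) require $r$ even, hence $r=2$, hence $\lambda^2=1$ with $\lambda\ne 1$, i.e.\ $\lambda=-1$. With $h=0$, branch (ii) of the theorem reads $p\equiv 1\pmod 4$ with $n',r$ both even; since $r=2$ is automatically even, this collapses to ``$p\equiv 1\pmod 4$, $n'$ even.'' Branches (iii) and (iv) both require $p\equiv -1\pmod 4$, and with $h=0$ even, branch (iii) demands $e$ even while branch (iv) (since $h=0$ is even) forces $e$ odd together with $\nu_2(n'r)>v$. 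The main bookkeeping step is to merge (iii) and (iv) into the single clean statement of the corollary by rephrasing everything in terms of $p^e$ rather than $p$.

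The heart of the argument, and the step I expect to require the most care, is translating the condition on the prime $p$ into a condition on $q=p^e$, using Lemma \ref{k^d}. For branch (ii) I would note that $p\equiv 1\pmod 4$ with $e$ arbitrary gives $p^e\equiv 1\pmod 4$, and $\nu_2(p^e-1)=v+\nu_2(e)\ge v\ge 2$; this is what I must check matches the corollary's condition (ii) ``$p^e\equiv 1\pmod 4$.'' For the $p\equiv-1\pmod4$ branches, I unify (iii) and (iv) as follows: when $e$ is even, $p^e\equiv 1\pmod 4$ (so these cases should be folded together with the $p\equiv1$ analysis under a uniform $p^e\equiv1\pmod4$ hypothesis), whereas when $e$ is odd, $p^e\equiv -1\pmod 4$ and Lemma \ref{k^d}(ii.1) gives $\nu_2(p^e+1)=v$, so the condition $\nu_2(n'r)>v$ with $r=2$ becomes $\nu_2(n')+1>v=\nu_2(p^e+1)$. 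This is exactly condition (iii) of the corollary.

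The cleanest way to finish is to reorganize the four branches of Theorem \ref{h-self dual} by the value of $p^e\bmod 4$ instead of $p\bmod 4$. I would argue that, for $p$ odd and $h=0$, the existence condition holds precisely when $\lambda=-1$ and either $p^e\equiv 1\pmod 4$ with $n'$ even (absorbing theorem-branch (ii) and the $e$-even subcase of branch (iii)), or $p^e\equiv -1\pmod 4$ with $\nu_2(n')+1>\nu_2(p^e+1)$ (the $e$-odd subcase feeding theorem-branch (iv)). Once this regrouping is verified by a short $\nu_2$ computation via Lemma \ref{k^d}, conditions (ii) and (iii) of the corollary follow, and the proof is complete.
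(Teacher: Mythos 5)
Your proposal is correct and follows essentially the same route as the paper's own proof: specialize Theorem \ref{h-self dual} at $h=0$, use $r\,|\,\gcd(2,p^e-1)$ to pin down $r\in\{1,2\}$ (hence $\lambda=1$ when $p=2$ and $\lambda=-1$ when $p$ is odd), and then merge the theorem's branches (ii) and (iii) into the corollary's condition (ii) and translate branch (iv) into condition (iii) via Lemma \ref{k^d}. Your write-up just spells out the $\nu_2$ bookkeeping that the paper leaves terse.
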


\begin{proof}
Take $h=0$ in Theorem \ref{h-self dual}.
The condition that $r\,|\,\gcd(p^h+1,p^e-1)$ implies that $r\,|\, 2$.
Then Theorem \ref{h-self dual} (i) is reduced to $p=2$, $\nu_2(n)\ge 1$,
hence $r=1$.
And Theorem \ref{h-self dual} (ii) and (iii) are reduced to
$p^e\equiv 1\!\pmod 4$, $r=2$ and $n'$ is even.
Finally, Theorem \ref{h-self dual} (iv) is reduced to
$p^e\equiv -1\!\pmod 4$, $r=2$ and $\nu_2(n')+1>\nu_2(p^e+1)$.
\end{proof}

We remark that (i) of Corollary \ref{self-dual} is 
the cyclic (but not semisimple) case.
In \cite{W} there is a similar result for any group codes.
On the other hand, if it is restricted to the semisimple case,
then Corollary \ref{self-dual} (i) is not allowed,
and $n'=n$ (recall that $n=p^{\nu_p(n)}n'$).
So \cite[Theorem 3]{Bl08} or \cite[Corollary 21]{CDFL} are
obtained as a consequence of Corollary \ref{self-dual}.

\begin{Corollary}\label{Hermitian self-dual}
Hermitian self-dual $\lambda$-constacyclic codes
over ${\Bbb F}_{q}$ of length $n$ exist if and only if
$e$ is even, $r\,\big|\,\gcd(p^{\frac{e}{2}}+1,p^e-1)$
and one of the following holds:
\begin{itemize}
\item[\bf(i)]
$p=2$ and $\nu_2(n)\ge 1$.
\item[\bf(ii)]
$p^{\frac{e}{2}}\equiv 1\!\pmod 4$,
both $n'$ and $r$ are even.
\item[\bf(iii)]
$p^{\frac{e}{2}}\equiv -1\!\pmod 4$,
both $n'$ and $r$ are even, $\nu_2(n'r)>\nu_2(p^{\frac{e}{2}}+1)$.
\end{itemize}
\end{Corollary}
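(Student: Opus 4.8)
The plan is to obtain Corollary \ref{Hermitian self-dual} as the specialization $h=\frac{e}{2}$ of Theorem \ref{h-self dual}. The Hermitian inner product is precisely the $p^{e/2}$-inner product, so it is defined only when $e$ is even; this accounts for the hypothesis ``$e$ is even'' in the corollary, and under it $h=\frac{e}{2}$ is a genuine element of $[0,e]$. The overarching divisibility condition $r\,\big|\,\gcd(p^h+1,p^e-1)$ of Theorem \ref{h-self dual} becomes $r\,\big|\,\gcd(p^{e/2}+1,p^e-1)$ verbatim, so that part transfers with no work. Thus everything reduces to matching the four alternatives (i)--(iv) of Theorem \ref{h-self dual}, evaluated at $h=\frac{e}{2}$, against the three alternatives (i)--(iii) of the corollary.

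The case $p=2$ is immediate: alternative (i) of Theorem \ref{h-self dual} is ``$p=2$ and $\nu_2(n)\ge1$'', which is exactly alternative (i) of the corollary. For $p$ odd I would run a short case analysis on $p^{e/2}\bmod 4$, using that this residue is determined by $p\bmod 4$ and the parity of $\frac{e}{2}$: when $p\equiv1\pmod4$ one has $p^{e/2}\equiv1\pmod4$ for every exponent, and when $p\equiv-1\pmod4$ one has $p^{e/2}\equiv(-1)^{e/2}\pmod4$. Consequently $p^{e/2}\equiv1\pmod4$ holds exactly in the two situations covered by alternatives (ii) and (iii) of Theorem \ref{h-self dual}: namely $p\equiv1\pmod4$, or else $p\equiv-1\pmod4$ together with $h=\frac{e}{2}$ even (that is, $4\mid e$, which is how ``all of $n',r,e,h$ even'' reads once $e$ is known to be even). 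Since neither of these two alternatives carries a $\nu_2$-constraint, they collapse precisely to alternative (ii) of the corollary, ``$p^{e/2}\equiv1\pmod4$, both $n'$ and $r$ even''.

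It then remains to identify alternative (iv) of Theorem \ref{h-self dual} with alternative (iii) of the corollary. Here $p\equiv-1\pmod4$, and since $e$ is even, ``at least one of $e,h$ odd'' forces $h=\frac{e}{2}$ to be odd, i.e. $p^{e/2}\equiv-1\pmod4$; conversely $p^{e/2}\equiv-1\pmod4$ with $p$ odd forces exactly this configuration. The only remaining point is the $\nu_2$-inequality: Theorem \ref{h-self dual}(iv) requires $\nu_2(n'r)>v$, where $p=-1+2^vu$ with $2\nmid u$, whereas the corollary requires $\nu_2(n'r)>\nu_2(p^{e/2}+1)$. I would close the gap by invoking Lemma \ref{k^d}(ii.1) with $k=p$ and $d=\frac{e}{2}$ odd, which yields $\nu_2(p^{e/2}+1)=v$, so the two inequalities coincide. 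The main, and indeed the only nontrivial, obstacle is this bookkeeping: keeping straight how $p\bmod4$ and the parity of $\frac{e}{2}$ combine, folding the two ``$p^{e/2}\equiv1$'' alternatives of the theorem into the single corollary alternative (ii), and translating the $\nu_2$-condition of alternative (iv) through Lemma \ref{k^d}.
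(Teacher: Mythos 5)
Your proof is correct and follows essentially the same route as the paper: specializing Theorem \ref{h-self dual} at $h=\frac{e}{2}$, merging its alternatives (ii) and (iii) into the single condition $p^{e/2}\equiv 1\pmod 4$, and using Lemma \ref{k^d}(ii.1) to identify $v=\nu_2(p^{e/2}+1)$ when $\frac{e}{2}$ is odd, exactly as in the paper's argument. No gaps; the bookkeeping on $p\bmod 4$ and the parity of $\frac{e}{2}$ is carried out correctly.
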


\begin{proof}
In Theorem \ref{h-self dual}, let $e$ be even and $h=\frac{e}{2}$.
So $r\,\big|\,\gcd(p^{\frac{e}{2}}+1,p^e-1)$.
Then (i) of the corollary is just the (i) of Theorem \ref{h-self dual}.
By Lemma \ref{k^d}, Theorem~\ref{h-self dual} (ii) and (iii) are reduced to
the (ii) of the corollary. Finally, when $p=-1+2^v u$ with $v\ge 2$ and $2\nmid u$,
$\frac{e}{2}$ is odd if and only if $p^{\frac{e}{2}}\equiv -1\!\pmod 4$;
and at that case, $v=\nu_{2}(p^{\frac{e}{2}}+1)$; see Lemma \ref{k^d}.
So Theorem \ref{h-self dual} (iv) is reduced to (iii) of the corollary.
\end{proof}

\section{Examples}

The first example is constructed in the way of Eqn \eqref{1/2 p} to
illustrates the repeated-root case where $p=2$.

\begin{Example}\rm
Let $p=2$, $e=2$, i.e., $q=4$, and let  $\theta\in{\Bbb F}_4$
be a primitive third root of unity, i.e.,
${\Bbb F}_4=\{0,1,\theta,\theta^2\}$ and $\theta^2+\theta+1=0$.
Let $n=2$, hence $\nu_2(n)=1$ and $n'=1$.

(i)~
Take $\lambda=\theta^2$ (so $r=3$).
Then $r\,\big|\,\gcd(2^1+1,2^2-1)$ and Corollary~\ref{Hermitian self-dual}~(i)
holds, so an Hermitian self-dual $\theta^2$-constacyclic code exists
(but the self-dual $\theta^2$-constacyclic codes do not exist).
In fact, $X^2-\theta^2=(X-\theta)^2$, and $1+r{\Bbb Z}_{n'r}=\{1\}$.
The $q$-coset function $\varphi(1)=1$ (then $\bar\varphi=\varphi$) 
is corresponding to the $\theta^2$-constacyclic code 
$C_\varphi\le R_{2,\theta^2}$ generated by $X+\theta$, i.e.,
$$C_\varphi=\langle X+\theta\rangle
 =\big\{(0,0),(\theta, 1), (\theta^2,\theta), (1,\theta^2)\big\},$$
which is Hermitian self-dual since $-3\varphi=\bar\varphi$. 
Also, a direct computation is as follows:
$$
 \big\langle (\theta,1),(\theta,1)\big\rangle_1
  =\theta\cdot\theta^2+1\cdot 1^2=1+1=0.
$$

(ii)~ However, if take $\lambda=1$ (i.e., $r=1$),
then a self-dual cyclic code exists
(which is also an Hermitian self-dual cyclic code) as follows:
$X^2+1=(X+1)^2$, $1+r{\Bbb Z}_{n'r}=\{1\}$, take $q$-coset function
$\psi(1)=1$, then $C_\psi\le R_{2,1}$ is a self-dual cyclic code.
\end{Example}

The next example is also the repeated-root case, but it is
constructed in the way of Eqn \eqref{even length}.

\begin{Example}\rm
Let $p=3$, $e=4$ hence $q=3^4$.
Then ${\Bbb F}_q$ contains a primitive $16$-th root of unity.
Take $\lambda=\theta^{12}$ and $n=3\cdot 4$.
Hence $r=4$, $\lambda'=\theta^{4}$, $\nu_3(n)=1$, $n'=4$,
$[0,3^{\nu_3(n)}]=\{0,1,2,3\}$,
$1+r{\Bbb Z}_{n'r}=\{1,5,9,13\}$ whose elements
are all fixed by $\mu_q=\mu_{3^4}$, and
$$
  X^{12}-\lambda=(X^{4}-\theta^4)^3
   =(X-\theta)^3 (X-\theta^5)^3 (X-\theta^9)^3 (X-\theta^{13})^3 .
$$
Define a $q$-coset function
$\varphi :1+r{\Bbb Z}_{n'r}\rightarrow [0,3] $ by
$$\varphi(1)=1,~~~\varphi(5)=2,~~~\varphi(9)=1,~~~\varphi(13)=2. $$
Then
$$f_\varphi(X)=(X-\theta)(X-\theta^5)^2(X-\theta^9)(X-\theta^{13})^2. $$
We can consider the $\theta^{12}$-constacyclic code
$C_\varphi\le R_{12,\theta^{12}}$ with check polynomial $f_\varphi(X)$.
It is easy to check that $-3\varphi=\bar\varphi$.
We have the following conclusions.
\begin{itemize}
\item
By Theorem \ref{theorem s phi dual}, $C_\varphi$
is a $3^1$-self-dual $\theta^{12}$-constacyclic code.
\item
By Corollary \ref{self-dual} and Corollary \ref{Hermitian self-dual},
$C_\varphi$ is neither self-dual nor Hermitian self-dual,
because $r\ne 2$ and $r\nmid\gcd(3^2+1,3^4-1)$.
\item
By Lemma \ref{s phi}, 
for any $h\in[0,4]$, the $C_\varphi$ is an isometrically $3^h$-self-dual
$\theta^{12}$-constacyclic codes of length $12$ over ${\Bbb F}_{3^4}$.
For example, because (cf. Lemma~\ref{s phi})
$${\cal M}_{(-3^{\frac{4}{2}})(-3)}(C_\varphi)
={\cal M}_{-3^{\frac{4}{2}}}(C_{-3\varphi})
={\cal M}_{-3^{\frac{4}{2}}}(C_{\bar\varphi})
=C_\varphi^{\bot\frac{4}{2}},
$$
the code $C_\varphi$ is isometrically Hermitian self-dual.
\end{itemize}
\end{Example}

The following example shows that a constacyclic code
can be both self-dual and Hermitian self-dual.

\begin{Example}\rm
Let $p=3$, $q=9$ (i.e., $e=2$), $\lambda=-1$ (i.e., $r=2$) and $n=4$.
Then $\nu_p(n)=\nu_3(4)=0$ (i.e., it is the semisimple case: $n=n'=4$),
$n'r=4\cdot 2=8$, $1+r{\Bbb Z}_{n'r}=\{1,3,5,7\}$ 
on which $\mu_q=\mu_9$ is the identity permutation. 
Take a $q$-coset function $\varphi$ as follows:
$$
    \varphi(1)=\varphi(3)=0,~~~~
    \varphi(5)=\varphi(7)=1. $$
Then
$$ \bar\varphi(1)=\bar\varphi(3)=1,~~~~
    \bar\varphi(5)=\bar\varphi(7)=0. $$
It is easy to check that
$$
\varphi=-\bar\varphi,~~~~~ \varphi=-3\bar\varphi
$$
Thus, $C_\varphi\le R_{4,-1}$ is a $[4,2,3]$ negacyclic code over ${\Bbb F}_9$
which is both self-dual and Hermitian self-dual.
\end{Example}

In many cases the self-dual constacyclic codes and
Hermitian self-dual constacyclic codes both exist,
but there is no constacyclic code which is both self-dual and Hermitian self-dual.

\begin{Example}\label{ex q=25}\rm Let $p=5$, $e=2$, i.e., $q=5^2=25$.
Take $r=2$ (i.e., $\lambda=-1$), $n=26$.
Then $n'=n=26$ (i.e., $\nu_p(n)=0$), $n'r=52$,
$X^{26}+1$ has no multiple roots and
${\Bbb F}_{25}[X]/\langle X^{26}+1\rangle$ is semisimple. Consider
\begin{align*}
 1+r{\Bbb Z}_{n'r}=1+2{\Bbb Z}_{52}
 =\{&1,~\,3,~\,5,~\,7,\,~ 9,~11,13,15,17,19,21,23,25,\\
 & 27,29,31,33,35,37,39,41,43,45,47,49,51\}.
\end{align*}
The $q$-cosets are 
(where $Q_i=\{i,iq,iq^2,\cdots\}$ denotes the $q$-coset containing $i$):
\begin{align*}
 (1+r{\Bbb Z}_{n'r})/\mu_q=(1+2{\Bbb Z}_{52})/\mu_{25}
  =\big\{\,&Q_1,~Q_3,~ Q_5,~ Q_7,~ Q_9,~ Q_{11},~ Q_{13},\\
   & Q_{27},Q_{29},Q_{31},Q_{33},Q_{35},Q_{37},Q_{39}\,\big\}.
\end{align*}
where
$$\begin{array}{llll}
Q_1=\{1,25\}, & Q_3=\{3,23\}, & Q_5=\{5,21\},\\
Q_7=\{7,19\}, & Q_{9}=\{9,17\}, & Q_{11}=\{11,15\}, & Q_{13}=\{13\},\\
Q_{27}=\{27,51\}, & Q_{29}=\{29,49\}, & Q_{31}=\{31,47\},\\
Q_{33}=\{33,45\}, & Q_{35}=\{35,43\}, & Q_{37}=\{37,41\}, & Q_{39}=\{39\}\\
\end{array}$$
The orbits of $\mu_{-1}$ on $(1+2{\Bbb Z}_{52})/\mu_{25}$ are as follows:
$$
\{Q_1,Q_{27}\},  \{Q_3,Q_{29}\},  \{Q_5,Q_{31}\},
\{Q_7,Q_{33}\}, \{Q_{9},Q_{35}\}, \{Q_{11},Q_{37}\},
\{Q_{13},Q_{39}\}.
$$
The orbits of $\mu_{-5}$ on $(1+2{\Bbb Z}_{52})/\mu_{25}$ are as follows:
$$
\{Q_1,Q_{31}\},  \{Q_3,Q_{37}\},  \{Q_5,Q_{27}\},
\{Q_7,Q_{9}\}, \{Q_{11},Q_{29}\}, \{Q_{33},Q_{35}\},
\{Q_{13},Q_{39}\}.
$$
Correspondingly, 
we define two $q$-coset functions $\varphi_{-1}$, $\varphi_{-5}$ as follows:
$$
 \varphi_{-1}(Q_j)=
 \begin{cases} 0, & j=1,3,5,7,9,11,13;\\
                         1, & j=27,29,31,33,35,37,39.\end{cases}
$$
$$
 \varphi_{-5}(Q_j)=
 \begin{cases} 0, & j=1,3,5,7,11,13,33;\\
                         1, & j=9,27,29,31,35,37,39.\end{cases}
$$
Then the negacyclic code $C_{\varphi_{-1}}\le R_{26,-1}$ 
over ${\Bbb F}_{25}$ is $5^0$-self-dual (i.e., self-dual), 
but not $5^1$-self-dual (i.e., not Hermitian self-dual). 
On the other hand, the negacyclic code $C_{\varphi_{-5}}\le R_{26,-1}$ 
over ${\Bbb F}_{25}$ is $5^1$-self-dual (i.e., Hermitian self-dual), 
but not $5^0$-self-dual (i.e., not self-dual).
\end{Example}

\section*{Acknowledgements}
The research of the authors is supported
by NSFC with grant number 11271005.

\end{document}